\documentclass[sigconf,edbt]{acmart-edbt2022}

\def\BibTeX{{\rm B\kern-.05em{\sc i\kern-.025em b}\kern-.08em
    T\kern-.1667em\lower.7ex\hbox{E}\kern-.125emX}}

\usepackage{booktabs} % For formal tables

\usepackage{setspace}
\usepackage{placeins}
\usepackage{afterpage}
\usepackage[utf8]{inputenc}
\usepackage{graphicx}
\usepackage{balance}
\usepackage{amsthm}
\usepackage{color}
\usepackage{caption}
\usepackage{amsmath}
\usepackage{tabu}
\usepackage{array}
\usepackage{booktabs}
%%%%%%%%%%%%%%%%%%%%%%%%%%%55
%\usepackage{amssymb}
\usepackage{threeparttable}
\usepackage{relsize}
\usepackage{physics}

\usepackage{algorithm}
\usepackage[noend]{algpseudocode}

\newlength{\maxwidth}
\newcommand{\algalign}[2]% #1 = text to left, #2 = text to right
{\makebox[\maxwidth][r]{$#1{}$}${}#2$}

\usepackage{float}
\usepackage{stfloats}
\usepackage{lipsum}
\usepackage[english]{babel}

\usepackage{comment}

\usepackage[T1]{fontenc}
\usepackage{mwe}    % loads »blindtext« and »graphicx«
\usepackage{subfig}

% for not numbering in algorithm lines
\let\oldnl\nl% Store \nl in \oldnl
\newcommand{\nonl}{\renewcommand{\nl}{\let\nl\oldnl}}% Remove line number for one line

\newtheorem{problem}{Problem}

\theoremstyle{definition}
\newtheorem{defn}{Definition}
 % same for example numbers

%\newtheorem{lemma}{Lemma}

\def\Equal{\texttt{=}}
% expected value

\settopmatter{printacmref=false}

\pagestyle{plain}

% Copyright
\setcopyright{rightsretained}

% DOI
\acmDOI{}

% ISBN
\acmISBN{978-3-89318-086-8}

%Conference
\acmConference[EDBT 2022]{25th International Conference on Extending Database Technology (EDBT)}{29th March-1st April, 2022}{Edinburgh, UK} 
\acmYear{2022}

\settopmatter{printacmref=false, printccs=false, printfolios=false}

\pagestyle{empty} % removes running headers

\begin{document}

\title{Differentially-Private Publication of Origin-Destination Matrices with Intermediate Stops}
%\titlenote{Produces the permission block, and copyright information}
%\subtitle{Extended Abstract}
%\subtitlenote{The full version of the author's guide is available as
 % \texttt{acmart.pdf} document}

\author{Sina Shaham}
%\authornote{Dr.~Trovato insisted his name be first.}
%\orcid{1234-5678-9012}
\affiliation{%
  \institution{University of Southern California}
%  \streetaddress{P.O. Box 1212}
%  \city{Dublin} 
  \state{Los Angeles, California} 
  \country{USA}
%  \postcode{43017-6221}
}
\email{sshaham@usc.edu}

\author{Gabriel Ghinita}
%\authornote{Dr.~Trovato insisted his name be first.}
%\orcid{1234-5678-9012}
\affiliation{%
  \institution{University of Massachusetts}
%  \streetaddress{P.O. Box 1212}
%  \city{Dublin} 
  \state{Boston, Massachusetts} 
  \country{USA}
%  \postcode{43017-6221}
}
\email{gghinita@cs.umb.edu}

\author{Cyrus Shahabi}
%\authornote{Dr.~Trovato insisted his name be first.}
%\orcid{1234-5678-9012}
\affiliation{%
  \institution{University of Southern California}
%  \streetaddress{P.O. Box 1212}
%  \city{Dublin} 
 \state{Los Angeles, California} 
  \country{USA}
%  \postcode{43017-6221}
}
\email{shahabi@usc.edu}

% The default list of authors is too long for headers}
% \renewcommand{\shortauthors}{B. Trovato et al.}
\renewcommand{\shortauthors}{}

\begin{abstract}
Conventional origin-destination (OD) matrices record the count of trips between pairs of start and end locations, and have been extensively used in transportation, traffic planning, etc. More recently, due to use case scenarios such as COVID-19 pandemic spread modeling, it is increasingly important to also record intermediate points along an individual's path, rather than only the trip start and end points. This can be achieved by using a multi-dimensional {\em frequency matrix} over a data space partitioning at the desired level of granularity. However, serious privacy constraints occur when releasing OD matrix data, and especially when adding multiple intermediate points, which makes individual trajectories more distinguishable to an attacker. To address this threat, we propose a technique for privacy-preserving publication of multi-dimensional OD matrices that achieves differential privacy (DP), the de-facto standard in private data release. We propose a family of approaches that factor in important data properties such as data density and homogeneity in order to build OD matrices that provide provable protection guarantees while preserving query accuracy. Extensive experiments on real and synthetic datasets show that the proposed approaches clearly outperform existing state-of-the-art.
\end{abstract}

%
% % The code below should be generated by the tool at
% % http://dl.acm.org/ccs.cfm
% % Please copy and paste the code instead of the example below. 
% %
% \begin{CCSXML}
% <ccs2012>
%  <concept>
%   <concept_id>10010520.10010553.10010562</concept_id>
%   <concept_desc>Computer systems organization~Embedded systems</concept_desc>
%   <concept_significance>500</concept_significance>
%  </concept>
%  <concept>
%   <concept_id>10010520.10010575.10010755</concept_id>
%   <concept_desc>Computer systems organization~Redundancy</concept_desc>
%   <concept_significance>300</concept_significance>
%  </concept>
%  <concept>
%   <concept_id>10010520.10010553.10010554</concept_id>
%   <concept_desc>Computer systems organization~Robotics</concept_desc>
%   <concept_significance>100</concept_significance>
%  </concept>
%  <concept>
%   <concept_id>10003033.10003083.10003095</concept_id>
%   <concept_desc>Networks~Network reliability</concept_desc>
%   <concept_significance>100</concept_significance>
%  </concept>
% </ccs2012>  
% \end{CCSXML}
% 
% \ccsdesc[500]{Computer systems organization~Embedded systems}
% \ccsdesc[300]{Computer systems organization~Redundancy}
% \ccsdesc{Computer systems organization~Robotics}
% \ccsdesc[100]{Networks~Network reliability}

% \keywords{ACM proceedings, \LaTeX, text tagging}

%% A "teaser" image appears between the author and affiliation
%% information and the body of the document, and typically spans the
%% page.

\maketitle

\section{Introduction}\label{Sec: Introduction}

%OD matrices with intermediate points
Origin-destination (OD) matrices have been extensively used to characterize the demand for transportation between pairs of start and end trip points. Using OD matrices, one can provision appropriate capacity for a transportation infrastructure, by determining what is the demand (or trip {\em frequency}) for each source-destination pair. However, novel applications require more level of detail, for which conventional OD matrices are insufficient, due to the fact that they have a 2D structure, and intermediate points along a trajectory cannot be captured. Consider, for instance, the study of COVID-19 spread patterns in the ongoing pandemic, where an analyst needs to determine not only the end points of a trajectory, but also the intermediate points that a certain individual has visited, and where possible exposure to the virus occurred. In this case, it is necessary to record several distinct points across a trajectory, which leads to an increase in the dimensionality of OD matrices. We denote such enhanced data structures as OD matrices {\em with intermediate stops}.

%need for privacy
While such detailed OD matrices capture additional information, they also pose a more serious privacy threat for the individuals included in the data, since the finer level of granularity of trajectory representation allows an adversary to pinpoint a user with better accuracy. For instance, there may be a large number of users that travel between a suburban neighborhood and the city center. However, when intermediate stops are also included, e.g., a specific type of store that sells ethnic products, a gym specializing on a certain type of yoga, and a fertility clinic, there are far fewer individuals who follow such a path (and sometimes, perhaps just one individual), which may lead to serious privacy breaches related to that individual's gender, race and lifestyle details. It is thus essential to protect the privacy of individuals whose trajectories are aggregated to build detailed OD matrices, and {\em differential privacy (DP)}~\cite{dwork2006calibrating} is the model of choice to achieve an appropriate level of protection. 

%existing approaches use index structures but low-D and not data-adaptive
Specifically, DP bounds the ability of an adversary such that s/he cannot determine with significant probability whether the trajectory data of a target individual is present in the released OD matrix or not. The OD matrix with intermediate stop points is equivalent to a {\em multi-dimensional frequency matrix}, in which an element represents the number of individuals who took a trip that includes that specific sequence of start, intermediate and end points. According to DP, carefully calibrated noise is added to each count to bound the identification probability of any single individual. 

Several approaches tackled the problem of protecting frequency matrices for location data, but they do have serious limitations. For instance, solutions for DP-compliant location data histograms~\cite{AG, cormode2012differentially, zhang2014towards,zhang2016privtree} build data-independent structures that do not adapt well to data density, and assume a fixed dimensionality of the indexing structure, typically 2D only. As we show in Section~\ref{Experimental Evaluation}, they do not handle well skewed datasets, which are the most typical ones in the case of geospatial data. Another category of approaches attempts to capture trajectories using prefix trees or $n$-grams~\cite{acs2012differentially, chen2012differentially}, but those approaches transform cells in the data domain into a sequence of abstract string labels, and lose the proximity semantics that are so important when querying location-based data.

%frequency matrix with features
We propose a novel technique for sanitization of OD matrices with intermediate stops such that location proximity semantics are preserved, and at the same time the characteristics of the data are carefully factored in to boost query accuracy. Specifically, we build custom data structures that tune important characteristics like index fan-out and split points to account for data properties. This way, we are able to achieve superior accuracy while at the same time enforcing the strong protection guarantees of DP.

%contrib
Our specific contributions are:
\begin{itemize}
    \item We identify important properties of indexing data structures that have a high impact on query accuracy when representing location frequency matrices; 
    \item We design customized approaches that are guided by intrinsic data properties and automatically tune structure parameters, such as fan-out, split points and index height;
    \item We perform a detailed analysis of the obtained data structures that allows us to allocate differential privacy budget in a manner that is conducive to preserving as much data accuracy as possible under a given privacy constraint;
    \item We perform an extensive experimental evaluation on both real and synthetic datasets which shows that our proposed techniques adapt well to data characteristics and outperform existing state-of-the-art in terms of query accuracy.
\end{itemize}

Section~\ref{Sec: system model} introduces necessary background concepts and definitions. Section~\ref{Sec: Data Independent Approaches} outlines data-independent techniques, followed by data-adaptive approaches in Section~\ref{Sec: Data Dependent Approaches}. 
%Section 5 presents several scenarios on how the proposed approaches can be used for diverse domains. 
Section~\ref{Sec: literature reivew} surveys related work.
We present experimental evaluation results in Section~\ref{Experimental Evaluation}, followed by conclusions in Section~\ref{Sec: Conclusion}.

\section{Background and Definitions}\label{Sec: system model}

%, including the private publication of population histograms, origin-destination matrices, and spatial clustering.
%Our focus is on publishing such frequency matrices in a manner that is compliant with differential privacy. 
We assume the two-party system model shown in Fig.~\ref{Fig: system model}: a trusted data curator/owner collects the frequency matrix directly from individuals and sanitizes the data. Untrusted data analysts are interested in querying the private frequency matrix.

Let $F_1\times F_2 \times ...\times F_d$ be a $d$-dimensional array representing a frequency matrix $F$. Each entry $f_i\in F$ is a number denoting a {\em frequency} or {\em count}. For example, a two-dimensional frequency matrix can model a map with each entry indicating the number of individuals located in a particular area. The frequency matrix corresponds to a $d$-dimensional finite space hyper-rectangle, or $d$-orthotope. According to the differential privacy model, a protection mechanism adds to each matrix element noise from a carefully selected random distribution to prevent an adversary from learning with significant probability whether a particular individual's data was used or not when creating the matrix.
%The goal is to publish the frequency matrix $F$ such that no data entry's privacy is compromised. To this end, a widely accepted privacy metric, i.e. Differential Privacy (DP), is employed at the data curator before the publication of frequency matrix.
\begin{figure}[t]
%\centering
\raggedright
\includegraphics[scale=.4]{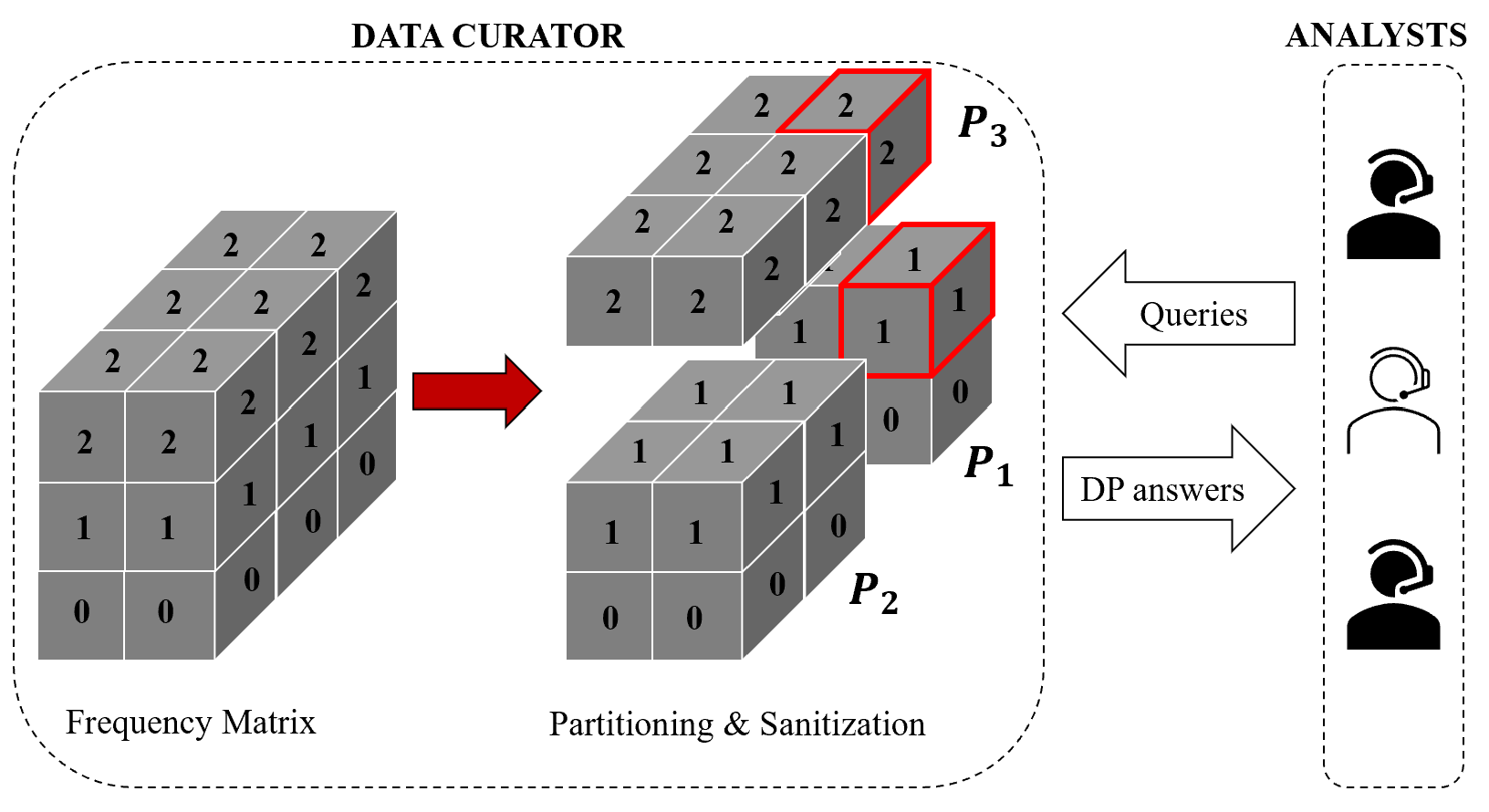}
%\hspace{1em}
\centering
%\vspace{-20pt}
\caption{System model for private frequency matrices.}
%\vspace{-15pt}
\label{Fig: system model}
\end{figure}

\subsection{Differential Privacy}
Differential privacy (DP)~\cite{dwork2006calibrating} is a popular privacy model which provides strong protection guarantees. It presents an aggregate query interface (i.e., count queries) and ensures that the presence or absence of an individual in the data does not significantly change the results of a query. Consider two frequency matrices $F$ and $F'$ that differ in a single record $t$, i.e., $F'=F\bigcup \{t\}$ or $F'=F \diagdown \{ t \}$. $F$ and $F'$ are commonly referred to as {\em neighboring} or {\em sibling} datasets. 
%A randomized algorithm $\mathcal{A}$ is said to be differentially private with privacy budget $\epsilon$ if the presence of an individual is bounded by $e^{\epsilon}$.

\begin{defn}[$\epsilon$-Differential Privacy] 
    A randomized mechanism $\mathcal{A}$ provides $\epsilon$-DP if for any pair of neighboring frequency matrices $F$ and $F'$, and any output value $S\in Range(\mathcal{A})$,
    \begin{equation}
        \dfrac{Pr(F=S)}{Pr(F'=S)} \leq e^\epsilon 
    \end{equation}
\end{defn}
Parameter $\epsilon$ is the {\em privacy budget}: lower values result in stricter privacy, but also require addition of noise with larger magnitude, decreasing query accuracy. 
%Setting $\epsilon$-DP requirement dictates that the output of a query obtained by executing mechanism $\mathcal{A}$ should not be significantly different by substitution of an individual in the database, making it difficult for an adversary to identify an individual. 
The {\em sequential composability} property of DP states that running in succession multiple mechanisms that each satisfy DP with privacy budgets $\epsilon_1, \epsilon_2,...,\epsilon_n$ respectively, is equivalent to the execution of a single mechanism with $\epsilon = \sum_{i=1}^n \epsilon_i$.

An essential concept of DP is the {\em sensitivity} of queries, which measures the maximal difference that can be achieved by the addition or removal of a single individual's record in the database. 

\begin{defn}[$L_1$-Sensitivity]
Given two sibling datasets $F$, $F'$ and a set of real-valued functions $\mathcal{G}=\{ g_1,g_2,...,g_m \}$, the $L_1$-sensitivity of $F$ is measured as 
$$s\Equal \underset{\forall F,F'}{max}\sum_{i\Equal1}^m |g_i(F)- g_i(F')|$$
\end{defn}
%In this work, we consider the sensitivity of the frequency matrix to be one as in most applications, the existence or absence of an individual can vary the outcome of query at most by one. 

The Laplace mechanism is a widely used technique to achieve $\epsilon$-DP. It adds to the output of a query function $g$ noise drawn from a Laplace distribution with scale $b$, where $b$ depends on two factors: sensitivity and privacy budget.

\begin{equation}
    Pr(x|b) = \dfrac{1}{2b}e^{|x|/b}\; \text{where }\; b=\dfrac{s}{\epsilon} 
\end{equation}
In the rest of the paper, we denote Laplace noise by $Lap( \dfrac {s } {\epsilon} )$. In the case of query functions that are modeled through a partitioning of the dataspace (e.g., a set of non-overlapping histogram bins), sensitivity is equal to $1$, since a record can fall in exactly one partition.
%Methods developed in this work mainly focus on the sensitivity of one as is the case in most spatiotemporal datasets, further reviewed in Section~\ref{Sec: applications}.

%\begin{lemma}\label{Lemma: sensitivity}
%The sensitivity of the frequency matrix is $s = 1$.  
%\end{lemma}
%\begin{proof}
%The existence or absence of an individual can change the outcome of a query at most by one. Therefore, the sensitivity of the database is set to be one.  
%The existence or absence of an individual can at most change the total count by as much as one for static datasets. However, in a relational database, the removal of an individual deducts a trajectory from the database. Since the maximum length of a trajectory is $d$, the total count is altered by at most $d$. 
%\end{proof}

\newcommand{\rvec}{\mathrm {\mathbf {r}}} 
\begingroup
\begin{table}
\caption {Summary of notations.} 
%\vspace{-10pt}
\centering
\begin{tabular}{>{\arraybackslash}m{2cm} >{\arraybackslash}m{5.8cm} }
\hline\hline
  Symbol  & Description\\    \hline
  $F$ & Frequency matrix \\
  $F_i$ & Dimension cardinality\\
  $N$& Total count of $F$\\
  $ \overline{N} $ &  Sanitized total count of $F$\\
  $m$ & Partitioning constant\\
  %$\overline{N}$& Sanitized total count\\
  $s$& Sensitivity\\
  $\epsilon_{\text{tot}}$ & Total privacy budget\\
  %$\epsilon_{\text{height}}$ & Height estimation budget\\
  $\epsilon_{\text{prt}}$ & Partitioning budget\\ 
  $\epsilon_{\text{data}}$ & Data perturbation budget\\
  %h$O(\mathcal{K})$ & Objective function for candidate set $\mathcal{K}$\\
  $H(F)$ & Entropy of $F$\\
  $ Lap(s/\epsilon)$ & Laplace noise with sensitivity $s$ and budget $\epsilon$\\
  %$c_{ij}$ & Number of locations in row~$i$ and column~$j$\\
  %$h_{\text{max}}$ & Tree height\\
\hline\hline
\end{tabular}
\label{tab:table1}
%\vspace{-15pt}
\end{table}
\endgroup

\subsection{Problem Statement}

Starting with an input frequency matrix, we create a set of non-overlapping partitions of the matrix and then publish a set of noisy counts for each of these partitions, according to the Laplace mechanism. The {\em sanitized}, DP-compliant frequency matrix consists of the {\em boundaries} of all partitions and their {\em noisy counts}. Since partitions are non-overlapping, we keep sensitivity low (i.e., 1). We refer to each input cell in the original frequency matrix as an {\em entry}, hence a {\em partition} is a group of matrix entries.
%used for the generation of the DP frequency matrix. 
Analysts (i.e., users of the sanitized matrix) ask multi-dimensional {\em range queries}. 
%\vspace{-5pt}
\begin{defn} {\em (Range Query)}
A range query on the frequency matrix $F$ is a $d$-orthotope with dimensions denoted as $d_1\times d_2 \times ...\times d_n$, where $d_i$ represents a continuous interval in dimension $i$. 
\end{defn}
%\vspace{-5pt}
For example, consider the $3\times2 \times 3$ frequency matrix shown in Fig.~\ref{Fig: system model}. 
%The matrix is generated by setting $(i,j,k) = k-1$ for all possible values of $i,\, j,$ and $k$. 
The generation of partitions is referred to as {\em partitioning} and the addition of noise to total sums is referred to as {\em sanitization}. The example shows a sample partitioning of the matrix generating three partitions $P_1,\, P_2$ and $P_3$ with total counts of $2,\, 4$ and $12$, respectively. In a simplified setup, the sanitization follows by adding Laplace noise to the partitions' total count and answering queries based on the resulting private frequency matrix. Moreover, a {\em uniformity assumption}~\cite{cormode2012differentially} is made within each partition to answer queries with varying shapes and sizes. For example, if the sanitized counts are $2+n_1$, $4+n_2$, and $12 +n_3$, where $n_i$ denotes Laplace noise added for sanitization, and an analyst asks a query including two cells whose borders are shown in bold red color, the answer is $\dfrac{12 +n_3}{6}+\dfrac{2+n_1}{4}$.\\

Suppose that the total count of a partition entailing $q$ entries is $p$, and its noisy count is denoted by $\overline{p}$. One can see that there are two sources of error while answering a query. The first type of error is referred to as {\em noise error}, which is due to Laplace noise added to the partition counts. The second source of error is referred to as {\em uniformity error}. The uniformity error arises as the assumption of uniformity is made within each partition so that the noisy count of a cell in the partition can be calculated as $\overline{p}/q$.

To evaluate accuracy of query results, we use the {\em mean relative error (MRE)}.
For a query $q$ with the true count $p$ and noisy count $\overline{p}$, MRE is calculated as 
\begin{equation}\label{Equation: MRE}
   MRE(q) =  \dfrac{|p - \overline{p}|}{p}\times 100
\end{equation}
%The problem we solve is formally presented as follows:

\begin{problem}\label{problem statement}
Given a frequency matrix $F$, generate an $\epsilon$-differentially private version of $F$ such that the expected value of relative error (MRE) is minimized. 
\end{problem}

In the design of methods for the publication of private frequency matrices, we make extensive use of {\em entropy} to understand the amount of information contained in the frequency matrix and the effect that partitioning has on information loss. 
\begin{defn} {\em (Entropy)}
Given a frequency matrix $F$ and a set of partitions $\mathcal{P} = \{P_1,\,P_2,...,\, P_n\}$ with the total counts $p_1,\,p_2,...,\, p_n$, the entropy of $F$ is defined as:
\begin{equation}
    H(F|\mathcal{P}) = -\sum_{i=1}^n \dfrac{p_i}{\sum_{j=1}^n p_j} \log_2{\dfrac{p_i}{\sum_{j=1}^n p_j}}
\end{equation}
\end{defn}
Table~\ref{tab:table1} summarizes notations used throughout the paper.

\subsection{Trajectory Modeling with OD Matrices}

Conventional OD matrices allow analysts to determine how many individuals traveled between pairs of locations, e.g., between the central business district (CBD) and a suburb. Increasing availability of mobile data and their use in complex planning problems makes it important to expand the expressiveness of OD matrices, by allowing one to include intermediate stops, which essentially amounts to supporting queries on trajectories. Furthermore, conventional OD matrices tend to use abstract representations of locations, where the spatial information may be lost, e.g., by tabulating counts of individuals traveling between pairs of zipcodes. Proximity of zipcodes may be lost in the process, and if one wishes to change the representation granularity, or perform range-based queries (e.g., find how many users traveled from a $1 km$ circle centered at point $A$ to a $1 km$ circle centered at $B$), such functionality is not possible.

Our proposed multi-dimensional histograms produce a hierarchical partitioning of the data domain that preserves locality and proximity information. It allows flexible queries, and captures intermediate points along a trajectory, as shown in Figure~\ref{fig:OD}. 
%Areas within the data domain can be identified as residential suburbs, CBD, entertainment district or hopsital areas. 
Assume a trajectory representation where one wishes to capture daily activities across several time frames, e.g., {\em morning->noon->evening}. Trajectory $T_1$ corresponds to a person who lives in a suburb, works in CBD and goes to see a play in the evening. This can be captured using a multi-dimensional histogram where the first pair of spatial coordinates corresponds to the morning location (suburb), followed by another pair in the CBD, and finally the evening in the theater district. Each of the time frames can be partitioned independently, resulting in the structure on the right half of Figure~\ref{fig:OD} (due to space constraints, we do not represent the evening time frame). Each trajectory corresponds to a single entry in this multi-dimensional matrix, according to each location at each time frame.

\begin{figure}[t]
	\includegraphics[width=\columnwidth]{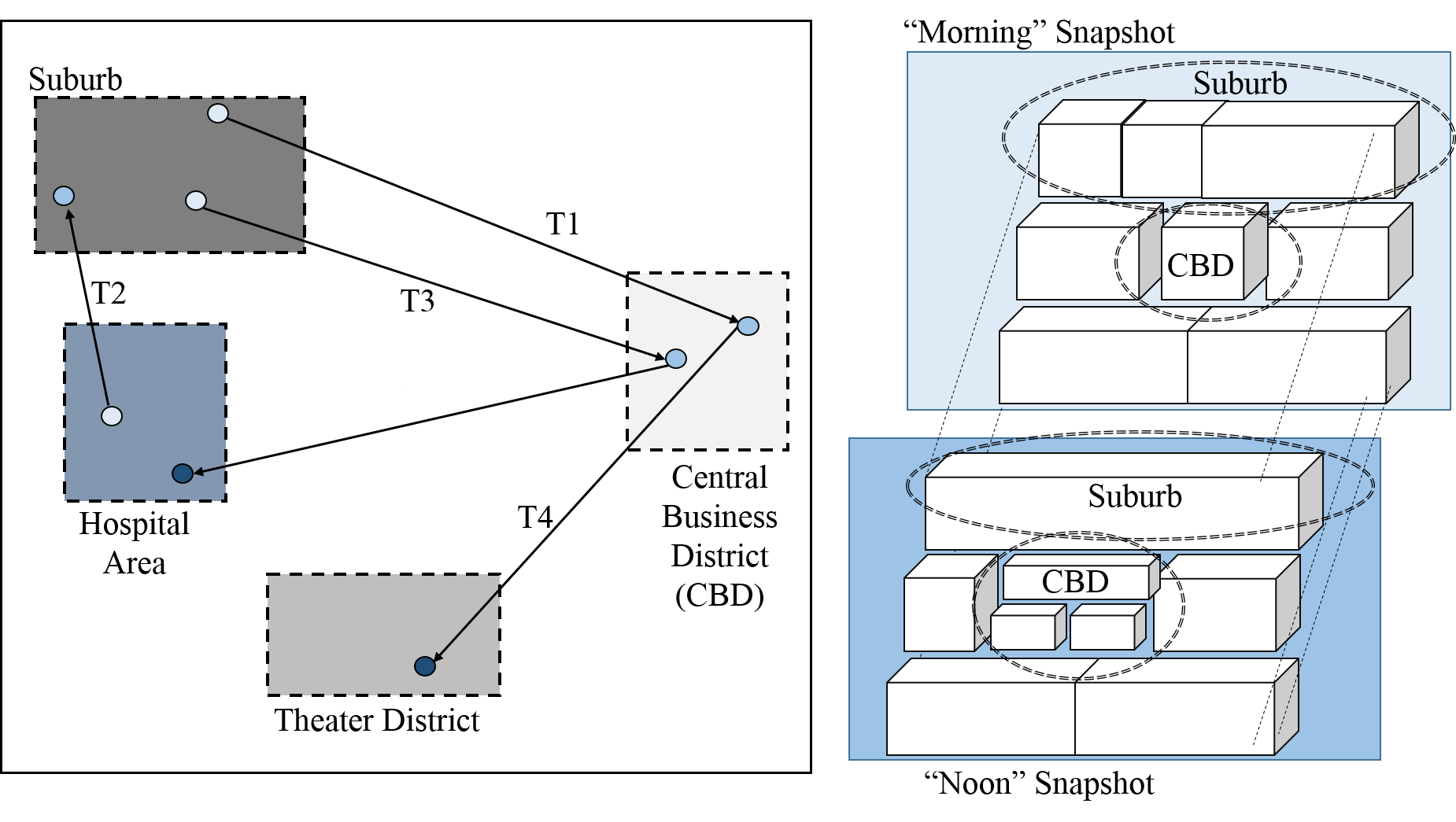}
	%\vspace{-10pt}
	\caption{Capturing trajectory data using OD matrices.}
	%\vspace{-10pt}
	\label{fig:OD}
\end{figure}

An important advantage of this representation is that the specific partitioning used for a particular dimension is customized to the data corresponding to that time frame. For instance, the same part of the space can be present in different frames, but with different granularities. In this example, the CBD area has low granularity for the morning time frame, since few people live there, but high granularity in the noon frame. Similarly, a theater district will not present interest in queries for the first or second time frame, but will likely be of high interest in the evening frame. Conventional OD matrices cannot accommodate such scenarios.

%\newpage

\section{Data-Independent Approaches} \label{Sec: Data Independent Approaches}
In this section, we introduce two data-independent approaches for the sanitization of frequency matrices with arbitrary dimensionality. These are extensions of existing work, particularly the technique in~\cite{AG}. In Section~\ref{Sec: Data Dependent Approaches} we will introduce more advanced data-dependent techniques that account for data distribution.

\subsection{Extended Uniform Grid (EUG)}~\label{Sec: EUG}
We extend the work in~\cite{AG},  originally proposed for two-dimensional frequency matrices. We refer to that algorithm as Uniform Grid (UG). The main idea of UG was to sanitize the total count of the frequency matrix and substitute it in a formula that results in a constant value $m$ that represents the granularity of dividing each dimension of a 2D frequency matrix.  After partitioning, the count in each of the partitions is sanitized using the Laplace mechanism. %The sanitized counts represent the private frequency matrix. %A key challenge is to discover a value of $m$ that can maximize the utility of the published private frequency matrix.

While the approach in~\cite{AG} only works for two-dimensional data, EUG provides a detailed analytical model that finds the optimal $m$ value for uniform partitioning in any number of dimensions. EUG is formally presented in Algorithm~\ref{Algo: EUG}. Suppose that the frequency matrix $F$ has $d$ dimensions represented by a $F_1\times F_2 \times ...\times F_{d}$ array, and let $N$ denote the total count of $F$. The objective is to find a value of $m$ such that, by updating the granularity of $F$ to $m^{d}$ and applying the Laplace mechanism, the utility of the published private histogram is maximized. The algorithm starts by utilizing a small amount of budget denoted as $\epsilon_0$ to obtain a noisy count of the total number of entries in the frequency matrix. 
\begin{equation}
    \overline{N} = N + Lap(s/\epsilon_0),
\end{equation}
where $\overline{N}$ denotes the sanitized count. The sanitized count is used for the estimation of $m$ by formulating an optimization problem. 

The value of $m$ can be estimated by considering the existing error sources, i.e., noise error and non-uniformity error. The former is used for sanitization of counts, and the latter is due to the assumption that data in each partition are uniform. Consider a query that selects $r$ portions of $F$, calculated by dividing its covered entries over the total number of entries. Hence, the query entails $rm^{d}$ entries of $F$. On the one hand, given that the noise added to each partition has a variance of $2/\epsilon^2$, the total additive noise variance sums up to $\dfrac{2rm^{d}}{\epsilon^2}$, or equivalently standard deviation of $\dfrac{\sqrt{2r}m^{d/2}}{\epsilon}$. \\

On the other hand, the query can be seen as a $d$-orthotope where the side length is proportional to $\sqrt[d]{r}$. Thus, each side of the orthotope spans $\sqrt[d]{r}\times m$ cells, and the number of points located inside the query is on average $\sqrt[d]{r}\times m \times \dfrac{ \overline{N}}{m^{d}}$. The term $ \overline{N}/m^{d}$ comes from the assumption of data uniformity in $F$. By further assuming that the non-uniformity error on average is some portion of the total density of the cells on the query border, we have the non-uniformity error as $\sqrt[d]{r}\times \dfrac{ \overline{N}}{c_0 m^{d-1}}$ for some constant $c_0$. Therefore, the aim is to find the optimal value of $m$ that minimizes the summation of two errors, i.e.,
\begin{equation}
   \min_{m} \dfrac{\sqrt{2r}m^{d/2}}{\epsilon} + \sqrt[d]{r}\times \dfrac{\overline{N}}{c_0 m^{d-1}}
\end{equation}
Solving based on stationary conditions of the above convex problem results in the optimal $m$ given by:

\begin{align}
   &\dfrac{d\sqrt{2r}m^{d-1}}{\epsilon} -(d-1) \sqrt[d]{r}\times \dfrac{\overline{N}}{c_0 m^{2d}}=0 \\
  % &\rightarrow \dfrac{\sqrt{2r}m^{3d-1}}{\epsilon} = \dfrac{(2d-1)}{d} \sqrt[2d]{r}\times \dfrac{N}{c_0 }\\
   %&\rightarrow m^{3d-1} =  \dfrac{2d-1}{d}\times r^{(1/2d - 1/2)}\times \dfrac{N\epsilon}{\sqrt{2}c_0 }\\
   %& \rightarrow m = (\dfrac{2d-1}{d}\times r^{(1/2d - 1/2)}\times \dfrac{N\epsilon}{\sqrt{2}c_0 })^{1/(3d-1)}\\
   & \rightarrow m = (\dfrac{2(d-1)}{d}\times r^{(1/d - 1/2)}\times \dfrac{\overline{N}\epsilon}{\sqrt{2}c_0 })^{2/(3d-2)}\label{Equ: m}
\end{align}
The base case of the problem occurs when the frequency matrix has two dimensions and results in the same equation proposed by~\cite{AG}:
\begin{equation}
    m=\sqrt{\dfrac{\overline{N}\epsilon}{\sqrt{2}c_0}}
\end{equation}
For higher dimensions, if query size is known in advance, Equation~(\ref{Equ: m}) can be used with the given $r$ to estimate the value of $m$; otherwise, by assuming that all query sizes are equally likely, integration over $r$ leads to Equation ($\ref{Equ: key formula}$). For derivation, let us define an auxiliary variable $\alpha$ as
\begin{equation}
    \alpha = (\dfrac{2(d-1)}{d}\times \dfrac{\overline{N}\epsilon}{\sqrt{2}c_0 })^{2/(3d-2)}
\end{equation}
Integration over $r$ leads to
\begin{align}
     \int_{0}^{1} \alpha \times r^{\dfrac{2-d}{d(3d-2)}} \,dr  
     &= \dfrac{\alpha}{\dfrac{2-d}{d(3d-2)}+1} r^{\dfrac{2-d}{d(3d-2)}+1} \Biggr|_{0}^{1}\\
     &= \alpha \times (\dfrac{d(3d-2)}{3d^2 -3d +2}),
\end{align}
and ultimately, results in:
\begin{equation} \label{Equ: key formula}
    m = (\dfrac{2(d-1)}{d}\times \dfrac{\overline{N}\epsilon}{\sqrt{2}c_0 })^{2/(3d-2)}\times (\dfrac{d(3d-2)}{3d^2 -3d +2}).\\
\end{equation}

Once the value of $m$ is calculated, each dimension of matrix $F$ is divided into $m$ equal intervals generating $m^d$ partitions. The entries in each partition are set to the partition's sanitized total count divided by the number of entries it contains. The sanitized total count of a partition is generated by adding its entries and using Laplace mechanism with the privacy budget of $\epsilon_{\text{tot}}-\epsilon_{\text{0}}$.

%The UG algorithm partitions the frequency matrix by dividing each dimension into $\sqrt[2d]{\dfrac{\overline{N} \epsilon_{\text{data}}}{constant}}$. Hence, resulting in $\sqrt{\dfrac{\overline{N} \epsilon_{\text{data}}}{constant}}$ partitions in total. The value of $constant$ is set to $10$ as suggested by the literature--. In each partition, the UG algorithm sums up the values of data entries, and adds Laplace noise with the budget $\epsilon_{\text{0}}$. Next, with the assumption of uniformity in the partition, the calculated sanitized count of the partition is divided among the entries. 

\begin{algorithm}[t]
\caption{Extended Uniform Grid (EUG)}\label{Algo: EUG}
\begin{flushleft}
    \hspace*{\algorithmicindent} \textbf{Input}:\;\; $F,\, \epsilon_{\text{tot}},\, \epsilon_0,\, s$;  %$k_{\text{min}},\,k_{\text{max}},\,\epsilon_{\text{level}}$;
\end{flushleft}
\begin{algorithmic}[1]
\State $\overline{N} \leftarrow SUM(F) + Lap(s/\epsilon_0)$
\State $\epsilon_{\text{tot}}\leftarrow \epsilon_{\text{tot}}- \epsilon_0$
\State $d\leftarrow \text{Number of dims in } F$
\State $m \leftarrow (\dfrac{2(d-1)}{d}\times r^{(1/d - 1/2)}\times \dfrac{\overline{N}\epsilon}{\sqrt{2}c_0 })^{2/(3d-2)}$\label{line calculating m}
\State // UPDATE GRANULARITY
\State Divide each dimension by $m$
\For {each new partition $i$}
\State $N' \leftarrow$ $SUM(i)$ 
\State $\overline{N}' \leftarrow N' + Lap(s/\epsilon_{\text{tot}})$
\For {each entry $j$ in $i$}
\State $j\leftarrow \overline{N}'/|i|$
\EndFor
\EndFor
\State \textbf{return} $F$
\end{algorithmic}
\end{algorithm}

\subsection{Entropy-based Partitioning (EBP)}\label{Sec: prt-entropy}

%The problem of finding a suitable fanout value for partitioning upon arrival to a node resembles the problem of finding a value of $m$ in method~\ref{Sec: EUG} with given the remaining budget and dimensions. 
A critical point in the EUG algorithm is how to determine the value of $m$. We propose Entropy-based Partitioning (EBP), a method for estimating a good value of $m$ based on the concept of entropy. In addition to providing better accuracy, EBP also addresses the issue with EUG's arbitrary choice of constant $c_0$ which is empirically set to $10/\sqrt{2}$. EBP proposes a more informed parameter selection process that does not require arbitrary value settings.

%The literature applies the formula by setting $c_0$ to $10/\sqrt{2}$ as a rule of thumb; however, the constant depends on the distribution of datasets and could vary for higher dimensions. To address this problem, we propose an alternative method to calculate the value of $m$ in Algorithm~\ref{Algo: EUG} based on the concept of entropy. 

Consider a $d$-dimensional frequency matrix $F$ with dimensions $F_1\times F_2 \times ...\times F_{d}$, and let $N$ represent the total count of $F$. Moreover, denote the privacy budget allocated for the calculation of $m$ by $\epsilon$. As in the case of Algorithm~\ref{Algo: EUG}, the objective is to find a value of $m$ that, by updating the granularity of $F$ to $m^{d}$, and applying the Laplace mechanism, the utility of the published private histogram is maximized. We look at the problem from an information theory perspective. Once the granularity of $F$ is updated, the variance of total Laplace noise used to sanitize partitions adds up to $\dfrac{2m^{d}}{\epsilon^2}$, leading to total standard deviation of $\dfrac{\sqrt{2}m^{d/2}}{\epsilon}$. The entropy of the noise imposed on the frequency matrix is therefore,
%\vspace{-5pt}
\begin{equation}
    H(\dfrac{\sqrt{2}m^{d/2}}{\epsilon}) = -\log_2{\dfrac{\epsilon}{\sqrt{2}m^{d/2}}}.
    %\vspace{-5pt}
\end{equation}
On the other hand, consider the amount of information loss that occurs due to the change in granularity. To calculate the information loss, the amount of information before and after changing the granularity $F$ is required. The information contained in $F$ before change of granularity can be calculated as $H(F)$, denoting the entropy of $F$. After partitioning is conducted, the entropy is reduced to $H(F|m)$, denoting entropy calculated based on the updated frequency matrix with the granularity of $m^d$. Thus, the amount of information loss incurred due to change in granularity is:
\begin{equation}
    \text{Information Loss} = H(F) - H(F|m).
\end{equation}
An optimization problem can be formulated to find the optimal value of $m$ that minimizes the average query error. %\vspace{-5pt}
\begin{equation}
    \min_{m} H(\dfrac{\sqrt{2}m^{d/2}}{\epsilon})+ H(F) - H(F|m).
    %\vspace{-5pt}
\end{equation}
By increasing the value of $m$, information loss becomes smaller, but the induced noise grows larger. The optimal value of $m$ is reached when the noise is equal to information loss. Unfortunately, entropy cannot be directly calculated due to privacy concerns; however, an approximation can be employed as follows. We assume that the number of entries is in the order of the number of data points, and data points are uniformly distributed over the $m^d$ partitions. Entropy before/after changing granularity can be approximated as
\begin{equation}
    H(F) \approx -\log_2 (1/N),\, H(F|M) \approx -\log_2 (1/m^{d})
\end{equation}
To preserve the privacy of users, the value of $N$ is sanitized beforehand based on the Laplace mechanism. The value of $m$ minimizing the optimization problem is derived as 
\begin{align}
  &-\log_2{\dfrac{\epsilon}{\sqrt{2}m^{d/2}}}  = -\log_2 (1/N)+\log_2 (1/m^{d})\\
  &\rightarrow \log_2{\dfrac{\epsilon}{\sqrt{2}m^{d/2}}} = \log_2 (m^{d}/N) 
  \rightarrow m = \sqrt[3d/2]{\dfrac{N\epsilon}{\sqrt{2}}}\label{Equ: entropy formula}
\end{align}

The derived formula in Equation (\ref{Equ: entropy formula}) is an alternative method to calculate the value of $m$ in the EUG algorithm. Therefore, the pseudocode in Algorithm~\ref{Algo: EUG} applies to EBP by replacing the formula in line~\ref{line calculating m} with Equation (\ref{Equ: entropy formula}).

\section{Data-Dependent Approaches}\label{Sec: Data Dependent Approaches}

\subsection{Overview}\label{subsection: Overview}

Data-independent algorithms overlook critical information about the distribution of data points, as they always assume uniform distribution. This is particularly problematic for higher dimensional frequency matrices, due to their tendency to be sparse. 
%For example, consider a frequency matrix in which most data points are located at the corners of the $d$-dimensional array $F$. In such a scenario, most frequency matrix entries are zero, and applying data-independent approaches results in high error. Consequently, a large amount of noise is induced on data, adversely affecting utility. 

To improve accuracy when publishing higher dimensional frequency matrices, we propose a tree-based approach called Density-Aware Framework (DAF) that takes into account density variation across different regions of the space. In addition, DAF introduces a key feature that enables custom stop conditions for partitioning. Intuitively, denser parts of the space should be split in more granular fashion, while for sparse areas the partitioning can stop earlier, since most likely large regions of the space are empty. The decision of when to stop partitioning the frequency matrix is made privately, and avoids over-partitioning which can lead to large errors in higher dimensional frequency matrices. %The framework can be used to apply various objective functions. We adopt the framework with two key objective functions, entropy and homogeneity.  

DAF is a hierarchical partitioning approach that resembles a tree index. Each node covers a portion of the frequency matrix, with the root node covering all entries. Descendants of a node are generated by a non-overlapping split of the parent node's entries. The split is conducted based on the depth of the node, such that nodes at depth $i$ are created by dividing dimension $i$ of their parent node's partition. The maximum index height is $d+1$. The fanout and the split point are customized at each node based on sanitized local information about the data. We propose two DAF alternatives based on different split objective functions: (i) {\em DAF-Entropy} (Section~\ref{Sec: DAF-Entropy}) which uses entropy information to estimate good split parameters, and (ii)  {\em DAF-Homogeneity} (Section~\ref{DAF-Homogeneity}) which focuses on creating partitions with high intra-region homogeneity. Section~\ref{Sec: budget allocation} introduces privacy budget allocation considerations that are relevant to both approaches.

\begin{algorithm}[t]
\caption{DAF-Entropy}\label{Algo: DAF-Entropy}
\begin{algorithmic}[1]
\State \textbf{Global Constants:} $\epsilon_{\text{tot}},\, m_0$
\Function{DAF-Entropy}{$ x,\, acc$}
\State $d\leftarrow$ Number of dimensions
\State $d'\leftarrow x.depth$
\If{$d'= d$ }
    \State $x.\text{ncount} \leftarrow x.\text{count} + Lap(1/(\epsilon_{\text{tot}}-acc))$
    \State \textbf{return} TRUE
\EndIf
\If{$d' = 0$}
    \State x.ncount = x.count + $Lap(1/(\epsilon_{tot}/100))$
    \State $acc \leftarrow acc+ \epsilon_{tot}/100$
    \State $m_0,m \leftarrow \sqrt[3(d-d')/2]{\dfrac{(x.ncount)\times(\epsilon_{\text{tot}}-acc)}{\sqrt{2}}} $
\Else
    %\State $mem = \epsilon_{\text{tot}} -acc$   
    \State $mem \leftarrow 
        \dfrac{\epsilon_{\text{tot}}\times m_0^{d'/3} \times (1-m_0^{1/3})}{m_0^{1/3} (1- m_0^{d/3}) } $ \label{l1}
    \State $x.\text{ncount} \leftarrow x.\text{count} + Lap(1/mem)$\label{l2}
    \State $acc \leftarrow acc+mem$
    \State $m \leftarrow \sqrt[3(d-d')/2]{\dfrac{(x.ncount)\times(\epsilon_{\text{tot}}-acc)}{\sqrt{2}}}$    
\EndIf
\If{Stop Conditions$\Equal$ TRUE}
    \State $mem \leftarrow \epsilon_{\text{tot}} -acc$
    \State $x.\text{ncount} \leftarrow x.\text{count} + Lap(1/mem)$
    \State \textbf{return} TRUE
\EndIf 
\State $M\leftarrow$ Split $(d'+1)$th dimension into $m$ intervals 
\For {$i\Equal 1\, \text{to}\, m$}
\State create a new node $x'$
\State $x'.F \leftarrow x.F$ with $i$th dimension set to $M[i]$
\State $x'.depth \leftarrow d'+1$ 
\State $x'.count \leftarrow \, SUM(x'.F)$ 
\State DAF-Entropy($ x',\, acc$)
\EndFor
%\State \textbf{return} $MinOutput$
\EndFunction
%\EndProcedure
\end{algorithmic}
\end{algorithm}

\subsection{DAF-Entropy}~\label{Sec: DAF-Entropy}

DAF-Entropy has the recursive structure presented in Algorithm~\ref{Algo: DAF-Entropy}. It receives as inputs the current node to split denoted by $x$, privacy budget $\epsilon_{\text{tot}}$, variable $acc$ tracking the budget spent so far (initially set to zero), and a constant $m_0$ set in the first round of the recursion which is used for budget allocation purposes at all levels of the tree (more details are provided in Section~\ref{Sec: budget allocation}). Each tree node $x$ is an object with four attributes: (i) $x.F$; the node's associated entries in the frequency matrix, (ii) $x.count$; the actual sum of entries in $x.F$, (iii) $x.ncount$; the sanitized (or noisy) count, and (iv) $x.depth$; the node's depth in the tree. The initial run of the function is performed for the root node, representing the whole frequency matrix.

DAF-Entropy sanitizes the total count of the root node and utilizes Equation~(\ref{Equ: entropy formula}) to partition the first dimension of the frequency matrix. New nodes are generated for each new partition assigned as one of the node's children. The algorithm recursively visits children and repeats the same process with the key difference that the split is done based on the second dimension. More generally, upon reaching a node at depth $i$, the split is conducted in the $(i+1)$-th dimension.

Once a new node is visited, its count is sanitized, and stop conditions are tested on the sanitized count. If satisfied, the tree is pruned, and the node turns into a leaf. A special technique is used in such a scenario to enhance accuracy. The algorithm uses the remaining amount of budget that was supposed to be used while visiting children to update the sanitized count. This technique improves accuracy as budget allocation is such that lower levels of the tree are allocated more budget. Thus, it is worth updating the sanitized count based on the remaining amount of budget. Note that, stop conditions can be selected based on application-specific details; however, the most prominent stop condition that can help avoid over-partitioning is to stop when the sanitized count is below a certain threshold. The algorithm continues until reaching depth $d$, indicating that partitioning on all $d$ dimensions has been implemented successfully or a stop condition is reached. Finally, the sanitized counts of the leaves are published, representing the private frequency matrix. %Note that a sanitized count for a partition is uniformly distributed among its entries to answer arbitrary shaped queries.

%add d_max

\begin{algorithm}[t]
\caption{DAF-Homogeneity}\label{Algo: DAF-Homogeneity}
\begin{algorithmic}[1]
%p
\State \textbf{Global Constants:} $p,\, q,\, \epsilon_{\text{tot}},\, m_0$
\Function{DAF-Homogeneity}{$ x,\, acc$}
\State $d\leftarrow$ Number of dimensions
\State $d'\leftarrow x.depth$
\If{$d'= d$ }
    \State $x.\text{ncount} \leftarrow x.\text{count} + Lap(1/(\epsilon_{\text{tot}}-acc))$
    \State \textbf{return} TRUE
\EndIf
\If{$d' = 0$}
    \State x.ncount = x.count + $Lap(1/(\epsilon_{tot}/100))$
    \State $acc \leftarrow acc+ \epsilon_{tot}/100$
    \State $m_0,m \leftarrow \sqrt[3(d-d')/2]{\dfrac{(x.ncount)\times(\epsilon_{\text{tot}}-acc)}{\sqrt{2}}} $
    \State $\mathcal{K}_1,\,... \mathcal{K}_p \leftarrow$ Use $m$ to generate candidate sets \label{r1}
    \State Compute $O(\mathcal{K}_1),\, O(\mathcal{K}_2),.., O(\mathcal{K}_p)$
    \State $\overline{O(\mathcal{K}_i)} \leftarrow O(\mathcal{K}_i) + Lap(2/(p\times \epsilon_\text{prt})),\forall i=1... p$
    \State $\mathcal{K}\leftarrow   \underset{i}{Minimize}\;\; \overline{O(\mathcal{K}_i)} \;\;\;  \forall i \Equal 1...p$\label{r2}
\Else
    %\State $mem = \epsilon_{\text{tot}} -acc$   
    \State $\epsilon \leftarrow 
        \dfrac{\epsilon_{\text{tot}}\times m_0^{d'/3} \times (1-m_0^{1/3})}{m_0^{1/3} (1- m_0^{d/3}) } $ 
    \State $acc \leftarrow acc+\epsilon$
    \State $\epsilon_{\text{prt}} \leftarrow q\epsilon$ 
    \State $\epsilon_{\text{data}} \leftarrow  (1-q)\epsilon$    
    \State $x.\text{ncount} \leftarrow x.\text{count} + Lap(1/\epsilon_{\text{data}})$
    \State $m \leftarrow \sqrt[3(d-d')/2]{\dfrac{(x.ncount)\times(\epsilon_{\text{tot}}-acc)}{\sqrt{2}}}$
    \State Execute lines \ref{r1} to \ref{r2}
\EndIf
\If{Stop Conditions$\Equal$ TRUE}
    %\State $mem \leftarrow \epsilon_{\text{tot}} -acc$
    \State $x.\text{ncount} \leftarrow x.\text{count} + Lap(1/(\epsilon_{\text{tot}} -acc))$
    \State \textbf{return} TRUE
\Else
\State $M\leftarrow$ Split $(d'+1)$th dimension based on $\mathcal{K}$ 
\For {$i\Equal 1\, \text{to}\, m$}
\State create a new node $x'$
\State $x'.F \leftarrow x.F$ with $i$th dimension set to $M[i]$
\State $x'.depth \leftarrow d'+1$ 
\State $x'.count \leftarrow \, SUM(x'.F)$ 
\State DAF-Homogeneity($ x',\, acc$)
\EndFor
\EndIf 
%\State \textbf{return} $MinOutput$
\EndFunction
%\EndProcedure
\end{algorithmic}
\end{algorithm}

\subsection{DAF-Homogeneity}\label{DAF-Homogeneity}

The partitioning process plays a critical role in the private publication of frequency matrices. Hence, several attempts have been made in prior work~\cite{xiao2010differentially,ashwin} to find an efficient splitting mechanism, including partitioning independent of data, based on medians or using the frequency matrix's total count to estimate a viable partitioning granularity. Our earlier work in~\cite{shaham2021htf} shows that partitioning based on homogeneity can significantly improve the utility of private frequency matrices in 2D. The principal idea is to have mechanisms that can cluster the entries such that data density is homogeneous within each cluster. Recall that partitioning needs to follow the DP constraint as with any other part of the algorithm. Here, we extend the approach in~\cite{shaham2021htf} to higher dimensional frequency matrices. The approach is built on top of Algorithm~\ref{Algo: DAF-Entropy}, with a key difference that once fanout is calculated for a node, an alternative method is used to partition the space based on homogeneity. 

Suppose that while executing Algorithm~\ref{Algo: DAF-Entropy}, a node with depth $i$ is visited. DAF-Homogeneity starts by dividing the calculated amount of budget into two parts: sanitization budget ($\epsilon_{data}$), and partitioning budget ($\epsilon_{\text{prt}}$).  
\begin{equation}
    \epsilon_{\text{prt}} = q\epsilon_i,\, \epsilon_{\text{data}} = (1-q)\epsilon_i
\end{equation}
Constant $q$ denotes the ratio of the budget assigned for partitioning. This value is experimentally set to $0.3$. Next, the node's count is sanitized based on the Laplace mechanism with the privacy budget $\epsilon_{data}$, and substituted in Equation~(\ref{Equ: entropy formula}) to calculate the fanout $m$.

Suppose that $m=3$ and recall that for nodes with depth $i$, the split is conducted on dimension $i+1$. Let us denote the interval corresponding to the $(i+1)$-th dimension by $[k_{\text{start}},k_{\text{end}}]$. In the case of DAF-Homogeneity, given that the fanout is calculated to be $3$, the generated intervals for the $i+1$ dimension of children would be $[k_{\text{start}},k_1)$, $[k_1, k_2)$, and $[k_2, k_{\text{end}}]$, where
\begin{equation}
    k_1 = \lfloor k_0 + \dfrac{k_{\text{end}}- k_{\text{start}}}{3}\rfloor,
    k_2 = \lfloor k_0 + 2\times\dfrac{k_{\text{end}}- k_{\text{start}}}{3}\rfloor
\end{equation}

Instead of simply selecting $k_1$ and $k_2$ as splitting points, DAF-Homogeneity follows an alternative method: it generates $p$ sets of candidate partitioning sets $\mathcal{K}_1,\, \mathcal{K}_2,\,... \mathcal{K}_p$, where $p$ is an input to the algorithm. Each set $\mathcal{K}_j$ 
has a cardinality equal to the desired fanout, and is generated by drawing uniformly random split positions from every partition. For example, consider the first candidate set to be  $\mathcal{K}_1 = \{ k_1',k_2',k_3'\}$, where $k_1',k_2',\text{and } k_3'$ are uniformly random coordinates drawn from intervals $[k_{\text{start}},k_1)$, $[k_1, k_2)$, and $[k_2, k_{\text{end}}]$, respectively. Furthermore, let us denote the frequency matrix generated by setting the $i+1$ dimension into the $j$th interval by $F^j$. Next, the algorithm computes the homogeneity objective function for candidate sets, resulting in  $O(\mathcal{K}_1),\, O(\mathcal{K}_2),.., O(\mathcal{K}_p)$, where 
\begin{align}\label{Equ: objective function}
    O(\mathcal{K}) = \sum_{i=1}^{|\mathcal{K}|+1} \sum_{f_j\in F^i}|f_{j} - \mu_{F^i}|,
\end{align}
In the above equation, $\mu_{F^i}$ denotes the average of entries in $F^i$.
\begin{equation}
   \mu_{F^i} = \dfrac{\sum_{f_j\in F^i}f_{j} }{|F^i|}
\end{equation}
Then, the output values are sanitized based on the Laplace mechanism with the reserved privacy budget for partitioning. 
\begin{equation}
\overline{O(\mathcal{K}_i)} = O(\mathcal{K}_i) + Lap(s/\epsilon_\text{prt}),\forall i=1... p
\end{equation}
The optimal candidate set is chosen as the one that results in the minimum sanitized output. 
\begin{equation}
   \underset{i}{Minimize}\;\; \overline{O(\mathcal{K}_i)} \;\;\;  \forall i \Equal 1...p
\end{equation}

\begin{lemma}\label{lemma: sanitize}
Sensitivity of the homogeneity objective function is $2$.
\end{lemma}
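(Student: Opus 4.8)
The plan is to bound $|O(\mathcal{K})(F) - O(\mathcal{K})(F')|$ for an arbitrary pair of neighboring matrices $F,F'$ and show it never exceeds $2$; since the objective is a single real-valued function, this quantity is exactly its $L_1$-sensitivity. First I would exploit the fact that the candidate split positions in $\mathcal{K}$ are fixed \emph{before} the matrix is perturbed, so the induced partition of the current dimension into sub-matrices $F^1,\dots,F^{|\mathcal{K}|+1}$ is identical for $F$ and $F'$. Because $F$ and $F'$ differ by the insertion or removal of a single individual's record, exactly one entry of the frequency matrix changes, and by $\pm 1$. That entry falls into exactly one sub-matrix, say $F^i$, while every other term $\sum_{f_j\in F^{i'}}|f_j-\mu_{F^{i'}}|$ with $i'\neq i$ is untouched. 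The problem therefore reduces to bounding the change of a \emph{single} partition's sum of absolute deviations from its mean when one of its $n$ entries is incremented (or decremented) by $1$.

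For this core estimate I would decompose the change into two effects and bound each separately. Let $\mu$ denote the old mean of $F^i$ and let $f_k$ be the modified entry; incrementing $f_k$ by $1$ shifts the mean to $\mu' = \mu + 1/n$. I split the transition into: (a) replacing $f_k$ by $f_k+1$ while still measuring deviations against the old mean $\mu$, which changes the partition sum by $|f_k+1-\mu|-|f_k-\mu|$, bounded by $1$ via the reverse triangle inequality $\big||a|-|b|\big|\le|a-b|$; and (b) replacing $\mu$ by $\mu'$ in all $n$ absolute-deviation terms, each of which moves by at most $|\mu'-\mu|=1/n$, again by the reverse triangle inequality, for a total of at most $n\cdot(1/n)=1$. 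Adding the two contributions yields a bound of $2$ on the change in the affected partition, hence on $|O(\mathcal{K})(F)-O(\mathcal{K})(F')|$. The removal case is symmetric, so the sensitivity is at most $2$.

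The step I expect to be the main obstacle is part (b): the partition mean $\mu_{F^i}$ is data-dependent, so perturbing a single entry does not merely alter one summand—it simultaneously perturbs all $n$ deviation terms through the shifted mean. The key observation that makes this tractable is quantitative: the mean moves by exactly $1/n$ while there are exactly $n$ terms, so the aggregate effect of the mean shift sums to a constant ($1$) independent of both $n$ and the data values. Combined with the unit bound from the direct value change, this delivers the constant $2$. Finally, I would remark that the bound is approached as $n$ grows (with the incremented entry and the bulk of the mass lying on opposite sides of the mean), which confirms that $2$ is the correct sensitivity rather than a loose over-estimate and justifies its use in calibrating the Laplace noise $Lap(s/\epsilon_{\text{prt}})$ with $s=2$.
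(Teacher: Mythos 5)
Your proof is correct and takes essentially the same approach as the paper's: localize the effect of the added/removed record to the single affected partition, observe that its mean shifts by exactly $1/n$, and control every absolute-deviation term with the (reverse) triangle inequality. The only difference is bookkeeping---you interpolate through a hybrid term (new value, old mean) to get the bound $1+1=2$, whereas the paper bounds each term's change directly ($1/n$ for each of the $n-1$ untouched entries and $(n-1)/n$ for the modified one), yielding the marginally tighter $2(n-1)/n \le 2$; both arguments establish the lemma, and your closing remark that the bound is approached as $n$ grows matches why the constant $2$ is the right calibration for the Laplace noise.
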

\begin{proof}
In the calculation of objective function $O(\mathcal{K})$  for a given split index set $\mathcal{K}$, a data entry's existence or absence only affects one cell and the corresponding cluster. Let us denote the objective function after addition or removal of one data record by $O(\mathcal{K})'$. 
\begin{align}
     O(\mathcal{K})' = \sum_{i=1}^{|\mathcal{K}|+1} \sum_{f_j'\in F^i}|f_{j}' - \mu_{F_i}'|,
\end{align}

Without loss of generality assume that the additional record is located in the first cluster which results in $\mu_{F^1}' = \mu_{F^1} + 1/|F^1|$, and $\mu_{F^i}' = \mu_{F^i}$ for all $i=2,...,k+1$. Similarly, the counts are equal ($f_i' = f_i $) for all entries except a single entry denoted by $x$ for which we have we have $f_{x}' = f_{x} + 1$. Writing triangle inequality results in
\begin{equation}
\begin{split}
   \Big{|} |f_{i} - \mu_1 - \dfrac{1}{|F^1|}|-   |f_{i} - \mu_1| \Big{|} &\leq \dfrac{1}{|F^1|}\\ &\forall i\Equal1...|F^1|-\{x\}
\end{split}
\end{equation}
The sensitivity of the objective function can have the maximum value of two, as proven by the following inequality:
\begin{equation}
  \Big{|} | f_{x} +1- \mu_1 - \dfrac{1}{|F^1|}|-   |f_{x} - \mu_1|  \Big{|}\leq \dfrac{|F^1|-1}{|F^1|} 
\end{equation}
%Similarly, the sensitivity upper bound corresponding to a data entry's absence can be shown to be two. 
\end{proof}
%Another point to notice is that $m$ buckets create $m+1$ new nodes; therefore, it is decremented by one beforehand. 
The DAF-Homogeneity  pseudocode is shown in Algorithm~\ref{Algo: DAF-Homogeneity}.

\begin{figure}[!t]
	\subfloat[Non-adaptive approaches \label{fig_ex1}]{%
	\includegraphics[scale=.18]{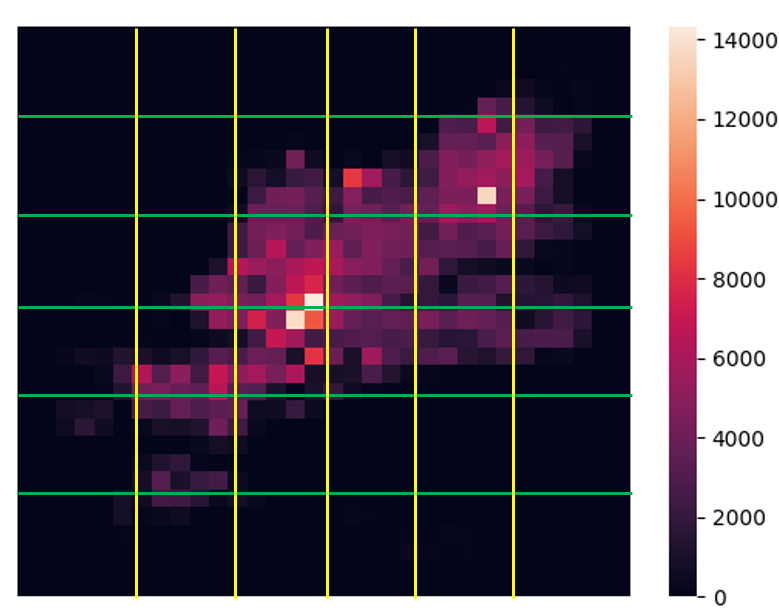}
	}
	\hfill
	\subfloat[DAF-Entropy \label{fig_ex2}]{%
	\includegraphics[scale=.18]{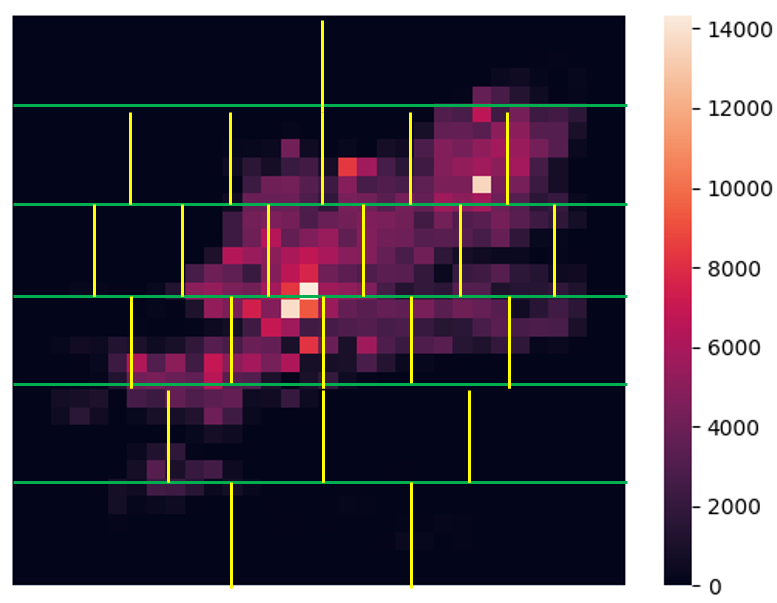}
	}
	\hfill
	\centering
	\subfloat[DAF-Homogeneity\label{fig_ex3}]{%
	\includegraphics[scale=.18]{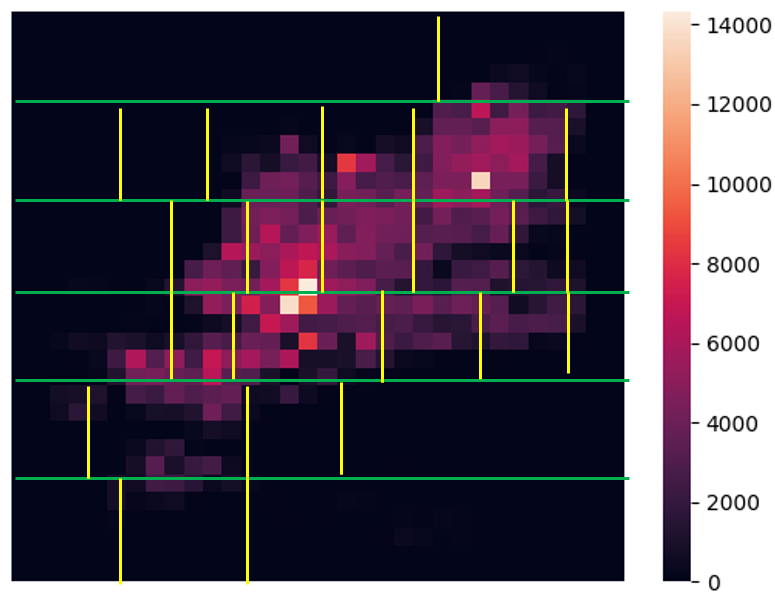}
	}
	%\vspace{-10pt}
	\caption{Intuition behind DAF sanitization approaches.}
	%\vspace{-12pt}
	\label{fig_ex}
\end{figure}

To better understand the reason why the proposed DAF approaches outperform competitor techniques, Fig.~\ref{fig_ex} provides a heatmap representation of Los Angeles city with 500,000 sampled from the Veraset dataset (experimental setup details are provided in Section 6.1). The partitioning conducted in the first and second dimensions are shown by green and yellow lines, respectively. For non-adaptive approaches, only the sanitized total population count is used for partitioning, and therefore, both dimensions are divided equally without considering user distribution (Fig.~\ref{fig_ex1}). Conversely, the DAF-Entropy approach is adaptively adjusting the number of partitioning as the dimension changes (Fig.~\ref{fig_ex2}). The DAF-Homogeneity technique goes one step further, and adjusts the number of partitions generated in each dimension, by selecting the split point such that resulting areas will exhibit homogeneous intra-bin density, hence reducing the negative effects of uniformity assumption and increasing query accuracy.

\subsection{Budget Allocation}\label{Sec: budget allocation}

The derivation of the optimal amount of privacy budget allocated for different levels of the hierarchy is a challenging task as nodes have varying fanouts. We formulate an optimization problem to achieve a good quality budget allocation. Denote the fanout of the root node by $m_0$. We assume that the progression of fanout is geometric. At depth $i$, there exist approximately $m_0^i$ nodes. Furthermore, we denote the budget allocated to depth $i$ of the tree by $\epsilon_i$. The goal is to minimize the variance of the noise added to each level:
\begin{align}
    \underset{\epsilon_1...\epsilon_d}{min}\;\;\;& \sum_{i\Equal1}^d m_{\text{0}}^i/\epsilon_i^2,
    \sum_{i\Equal1}^d \epsilon_i=\epsilon_{\text{tot}}', \;\; \epsilon_i>0 \; \;\forall i=1...d
\end{align}
where, $\epsilon_{\text{tot}}' = \epsilon_{\text{tot}}-\epsilon_0$. We have intentionally separated the root node's budget, as it will be used to calculate $m_0$. The optimization problem can be solved by writing Lagrangian and KKT conditions.
\begin{align}
    L(\epsilon_1,...,\epsilon_d,\lambda) &= \sum_{i\Equal1}^d m_0^i/\epsilon_i^2+ \lambda ( \sum_{i\Equal 1}^d \epsilon_i- \epsilon)\\
    &\Rightarrow \dfrac{\partial L}{\partial\epsilon_i} =  \dfrac{-2m_{\text{0}}^i}{\epsilon_i^3}+ \lambda =0
    \Rightarrow \epsilon_i = \dfrac{(2m_{\text{0}}^i)^{1/3}}{\lambda^{1/3}},
\end{align}
which leads to 
\begin{equation}
    \epsilon_i = \dfrac{\epsilon_{\text{tot}}'\times
    m_{\text{0}}^{i/3}}{\sum_{i\Equal1}^d m_{\text{0}}^{i/3}} = 
    \dfrac{\epsilon_{\text{tot}}'\times m_{\text{0}}^{i/3} \times (1-m_{\text{0}}^{1/3})}{m_{\text{0}}^{1/3} (1- m_{\text{0}}^{d/3}) }.
\end{equation}

A question arises on how to calculate the value $m_0$ upon which the above optimization problem is formulated. Note that the formulation only considers depths $1$ to $d$, and the root node is excluded from the equation. The value of $m_0$ is calculated in the first run of the recursive algorithm~\ref{Algo: DAF-Entropy}, and we set the budget to:
\begin{equation}
    \epsilon_0 =\dfrac{\epsilon_{\text{tot}}}{100}
\end{equation}
Therefore, a comparably small amount of budget is allocated to the root node to derive $m_0$. Based on the above formulation, one can see that lower levels of the tree benefit from significantly higher levels of budget. This helps to improve the utility of the published private histogram, as the sanitized leaf set of the tree represents the counts published by our approach.

\begin{comment}
\begin{center}
\begin{table*}[t]
\caption{Summary of Proposed and Competitor Approaches}
\centering
\begin{tabular}{| >{\centering\arraybackslash}m{1cm} |  >{\centering\arraybackslash}m{3cm} ||  >{\centering\arraybackslash}m{3cm} |
>{\centering\arraybackslash}m{9cm} |}
\hline        \multicolumn{2}{|c||}{Strategy}                     & Symbol                   & Description     \\
\hline        \multicolumn{2}{|c||}{Baseline Algorithms}            & IDENTITY                 & \cite{dwork2006calibrating}\\
\cline{3-4}   \multicolumn{2}{|c||}{}                              & UNIFORM                  & \cite{dwork2006calibrating}\\
\hline        \multicolumn{2}{|c||}{Non-adaptive Sanitization }      & EUG                      & Our proposed extension of~\cite{AG} to higher dimensions\\
\cline{3-4}   \multicolumn{2}{|c||}{Approaches}                              & EBP                  & Our proposed method based on entropy \\
\cline{3-4}   \multicolumn{2}{|c||}{}                             & MKM                      & \cite{lei2011differentially} \\
\hline  DAF&  No dedicated partitioning budget & DAF-Entropy & Our proposed method; adaptive entropy-based fanout, division indexes distributed uniformly\\
\cline{2-4}  &   Dedicated partitioning budget & DAF-Homogeneity & Our proposed extension of~\cite{shaham2021htf} to higher dimensions; division indexes are selected randomly for a given fanout\\
%\cline{3-4} &                                & DAF-Obj3 & Our proposed extension of --- to higher dimensions; division indexes are selected randomly for a given fanout from distributed intervals\\
\hline
\end{tabular}
\label{Table: contribution}
\end{table*}
\end{center}
\end{comment}

%\newpage

%\newpage

\section{Related Work}\label{Sec: literature reivew}

Prior works on private publication of frequency matrices can be classified into three categories: data independent, partially data dependent, and data dependent algorithms. The algorithms in the first category are independent of the underlying dataset. The partial data dependent algorithms are the category of algorithms where the number of data points is used to generate the private FMs, but no consideration is made for the data distribution. The algorithms in the last category take the distribution of data points into consideration to improve the utility. Most algorithms are developed to address only the publication of 1D and 2D FMs.

%The strategies data independent algorithms follow are independent of the distribution of data and are shown to perform better for large scale datasets. 
In the category of data-independent approaches, two baseline algorithms that stand out are called {\em singular} and {\em identity}. The singular algorithm~\cite{ashwin} considers the frequency matrix as a single partition and adds Laplace noise to the total count. The queries are answered based on the sanitized total count only, considering the assumption of data uniformity. The identity algorithm~\cite{dwork2006calibrating} on the other hand, adds Laplace noise to each entry of the frequency matrix. The number of partitions in this algorithm is equal to the total number of entries. 
%Despite their simplicity, baseline algorithms outperform many of the more complex algorithms developed in the literature. 
The {\em Privlet} algorithm~\cite{xiao2010differential} enhances the performance of the identity algorithm by transforming the frequency matrix based on wavelets and by adding  noise in the new domain. Then, the algorithm converts back to the noisy matrix and releases the DP counts. The authors in~\cite{cormode2012differentially} build a quadtree on top of the FM:  a tree-based spatial indexing structure that splits each node into four equal quarters, regardless of data placement. The so-called binning or partitioning of space without observing the histograms is studied in~\cite{cormode2021data}. The authors consider the amount of overlap between bins and propose an algorithm called 'varywidth' that provides improved performance in terms of the trade-off between the spatial precision and the accumulated variance over differentially private queries. The use of summaries for private publication of histograms is explored in~\cite{cormode2012differentially2}. The authors show it is possible to reduce the two-step approach of generating private summaries, in which first the private histogram is generated and then the summaries are released, to a one-step approach. The one-step method prevents the data owner and data user from getting overwhelmed with the large computational complexity overhead. %{\color{blue} Applying filtering and compression in the preprocessing process  can help to reduce the storage in the processing phase of the algorithms~\cite{cormode2021data}. }

In contrast to the data independent algorithms, data dependent approaches exploit the distribution of data in the FM to answer queries with higher accuracy. General purpose mechanisms~ \cite{li2013optimal, li2012adaptive} and their workload-aware counterpart DAWA \cite{li2014data} operate over a discrete 1D domain; however, they can be applied to the 2D domain by dimensional reduction transformations such as Hilbert curves~\cite{ashwin}. Unfortunately, dimensionality reduction can prevent range queries from being answered accurately, and also increases computational complexity. This significantly limits their practicality, particularly for higher-dimensional data. Data-aware tree-based algorithms such as k-d-trees~\cite{xiao2010differentially} allocate a portion of the budget to partitioning, and generate split points based on density. Hybrid approaches between data-independent and data-dependent algorithms have also been proposed, e.g., UG and AG~\cite{AG}. We refer to these approaches as partially data-dependent. Only the sanitized total count of the FM is used in the partitioning process.  The UG algorithm and its extension~\cite{AG} sanitize the total count of FMs and use it to alter the granularity of FM such that the utility of the published private FM is improved. The MKM approach proposed in~\cite{lei2011differentially} provides an alternative formula to partition FM considering its dimensionality. As is the case in UG, the formula only takes as input the total count of the frequency matrix and determines the granularity of FM based on the sanitized total count. In some cases, such approaches have been shown to provide superior performance to more complex methods~\cite{ashwin}.

%In addition to the differentially private publication of FMs, numerous efforts have been made to prevent the data curator from learning data points, particularly for spatial-temporal location datasets. The approaches are focused on three categories, including (i) obfuscation of user data using Geo-indistinguishability~\cite{andres2013geo}, (ii) preserving privacy based on $k$-anonymity~\cite{gruteser2003anonymous}, in which the aim is to hide the user data point among at least $k-1$ other data points, (iii) synthesizing user data\cite{bindschaedler2016synthesizing}, and (iv) cryptographic approaches~\cite{ghinita2008private,shaham2021efficient}. However, these techniques present serious limitations in the case of the problem we study, as they either do not provide sufficient protection (e.g., k-anonymity), they cannot protect user presence in the dataset (geo-indistinguishability) or they incur prohibitive computation costs (cryptographic approaches). 

There is prior work in storage, processing, and compression of histograms, but without considerations for privacy. The authors in~\cite{kernert2015spmacho} focus on lowering the computational complexity of matrix multiplication and storage. The proposed approach generates an execution plan for the multiplication of dense and sparse matrices. A cost model is also proposed to understand the sparsity of matrices and the estimation of density. The execution plan tends to optimize the overall cost overhead. An adaptive tile matrix representation is proposed in~\cite{kernert2016topology} for large matrix multiplication. An operator called ATMULT with the capability of shared memory parallel matrix multiplication is proposed for dynamic tile-granular optimizations, conducted based on the density estimation. The work in~\cite{diakonikolas2018fast} studies the problem of density estimation for higher dimensional histograms. The main idea is to estimate the distribution of data for a given set of samples. The algorithm provides near-optimal sample complexity, i.e. close to theoretical information limit, and runs in polynomial time.

%We evaluate experimentally the performance of our proposed data-independent and density-aware techniques in comparison with several baselines and state-of-the art algorithms for private publication of location datasets. 
%Section~\ref{Subsec: Experimental setup} presents the experimental setup, followed by results on synthetic data in Section~\ref{Subsec: Results on Synthetic Datasets} and on real data in Section~\ref{Subsec: Results on Real-world Datasets}.

\begin{center}
\begin{table}[t]
\caption{Summary of Compared Approaches}
\centering
%\vspace{-10pt}
\begin{tabular}{| >{\centering\arraybackslash}m{1cm} |  >{\centering\arraybackslash}m{3cm} ||  >{\centering\arraybackslash}m{3cm} |}
\hline        \multicolumn{2}{|c||}{Strategy}                     & Symbol                    \\
\hline        \multicolumn{2}{|c||}{Baseline Algorithms}            & IDENTITY      \cite{dwork2006calibrating}\\
   \multicolumn{2}{|c||}{}                              & UNIFORM        \cite{dwork2006calibrating}\\
\hline        \multicolumn{2}{|c||}{Non-adaptive Sanitization }      & EUG          \\
   \multicolumn{2}{|c||}{Approaches}                              & EBP  \\
   \multicolumn{2}{|c||}{}                             & MKM         \cite{lei2011differentially} \\
\hline \multicolumn{2}{|c||}{With partitioning budget} & DAF-Entropy \\
\hline \multicolumn{2}{|c||}{Without partitioning budget} & DAF-Homogeneity\\
%\cline{3-4} &                                & DAF-Obj3 & Our proposed extension of --- to higher dimensions; division indexes are selected randomly for a given fanout from distributed intervals\\
\hline
\end{tabular}
%\vspace{-10pt}
\label{Table: contribution}
\end{table}
\end{center}

%\vspace{-20pt}

\section{Experimental Evaluation}\label{Experimental Evaluation}

\subsection{Experimental Setup} \label{Subsec: Experimental setup}
 %We evaluate the considered approaches on both synthetic and real-world datasets, as follows:

{\bf Synthetic Datasets.} We generate synthetic frequency matrices according to both Gaussian and Zipf distribution. To generate a $d$-dimensional Gaussian frequency matrix $F$ with dimensions $F_1\times F_2 \times ...\times F_d$, a uniformly random integer is sampled in each dimension: $c_i\sim \text{Uniform}(1,F_i), \;\forall i=1...d$. The generated point $(c_1,c_2,...,c_d)$ is selected as the cluster center and $1$ million datapoints are generated with respect to the cluster center according to a normal distribution. Specifically, each data point $(x_1,x_2,...,x_d)\in \mathcal{Z}^d$ is sampled from a multivariate Gaussian distribution $(X_1,X_2,...,X_d)$, where $X_i\sim \mathcal{N}(c_i,var)$. 
%The variable $var$ denotes the variance and represents an input to our experiments. 
Changing the variance $var$ allows us to adjust the degree of data skewness (lower values of $var$ will correspond to more skewed data). Zipfian data are generated by sampling each datapoint from a multivariate Zipf distribution $(Y_1,Y_2,...,Y_d)$, where $Y_i\sim \dfrac{x^{-a}}{\zeta(a)}$. $\zeta(.)$ denotes the Riemann Zeta function
%~\cite{jessen1935distribution} 
and parameter $a$ controls the skew in the frequency matrix. As opposed to variance in Gaussian distribution, a higher value of $a$ results in a more skewed distribution for the Zipf distribution.

{\bf Real-world datasets.} We use a subset of the Veraset\footnote{Veraset is a data-as-a-service company that provides anonymized population movement data collected through signals of cell phones across the USA.} dataset~\cite{datarade}, including location measurements of cell phones in three US cities: New York, Denver and Detroit. For each city, we consider a large geographical region covering a $70 \times 70$ km$^2$ area centered at the city's central latitude and longitude.
These are chosen to represent cities with high, moderate and low densities, respectively. 
%The details of data sets and central coordinates are provided in Table~\ref{Table: population histograms}. 
Cities are modeled by a $1000\times 1000$ frequency matrix where each entry represents the number of data points in the corresponding region of the city. The selected data generates a frequency matrix of 1 million data points during the time period March 1-7, 2020.

\begin{comment}
\begin{table}[t]
\caption{Statistics of population histograms.}
\centering
\begin{tabular}{|>{\centering\arraybackslash}m{2.1cm} || >{\centering\arraybackslash}m{1.13cm} | >{\centering\arraybackslash}m{1.47cm} | 
>{\centering\arraybackslash}m{1.2cm} |}
 \hline  {\bf Dataset}        & {\bf Lat}   & {\bf Lon}   &    {\bf Entropy}    \\
 \hline  NewYork        & 40.7306  & -73.9352 &     17.04\\
 \hline  San Francisco  & 37.7739 & -122.4312 &  15.43\\
 \hline  Seattle         & 47.6080 & -122.3351 &   16.41\\
 \hline  Denver         & 39.7420 & -104.9915 &  16.14\\
 \hline  Detroit        & 42.3314 & -83.0457 &   16.68\\
 \hline  Phoenix        & 33.4483 & -112.0740 &  16.51\\
 \hline
\end{tabular}
\label{Table: population histograms}
\end{table}

\begin{table}[t]
\caption{Statistics of origin-destination matrices.}
\centering
\begin{tabular}{|>{\centering\arraybackslash}m{2.1cm} || >{\centering\arraybackslash}m{1.13cm} | >{\centering\arraybackslash}m{1.47cm} | 
>{\centering\arraybackslash}m{1.2cm} |}
  \hline  {\bf Dataset}        & {\bf Lat}   & {\bf Lon}  &     {\bf Entropy}    \\
 %\hline  Dataset        & Lat   & Lon  &  Variance  &    Entropy    \\
 \hline  NewYork        & 40.7306  & -73.9352 &   10.06  \\
 \hline  Denver         & 39.7420 & -104.9915 &  10.23\\
 \hline  Detroit        & 42.3314 & -83.0457 &   10.64\\
 \hline
\end{tabular}
\label{Table: OD-matrices}
\end{table}
\end{comment}

\begin{figure*}[t]
	\subfloat[2D, $\epsilon_{\text{tot}} = 0.1$.\label{s1}]{%
	\includegraphics[scale=.43]{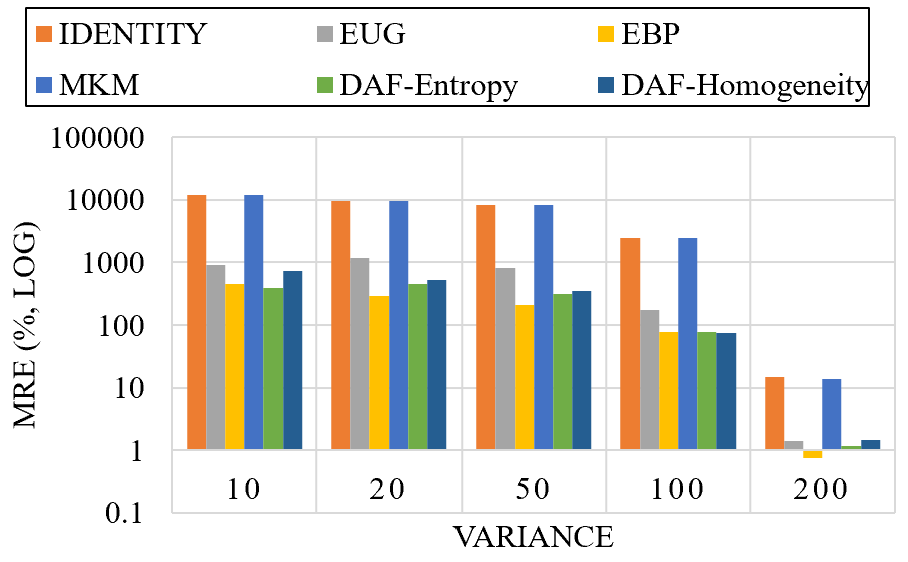}
	}
	\hfill
	\subfloat[2D, $\epsilon_{\text{tot}} = 0.3$. \label{s2}]{%
	\includegraphics[scale=.43]{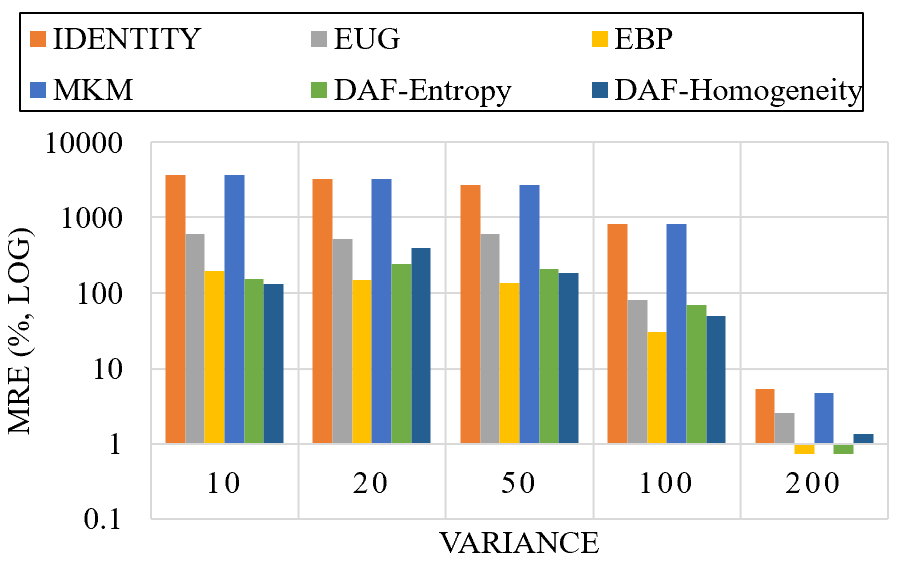}
	}
	\hfill
	\subfloat[2D, $\epsilon_{\text{tot}} = 0.5$. \label{s3}]{%
	\includegraphics[scale=.43]{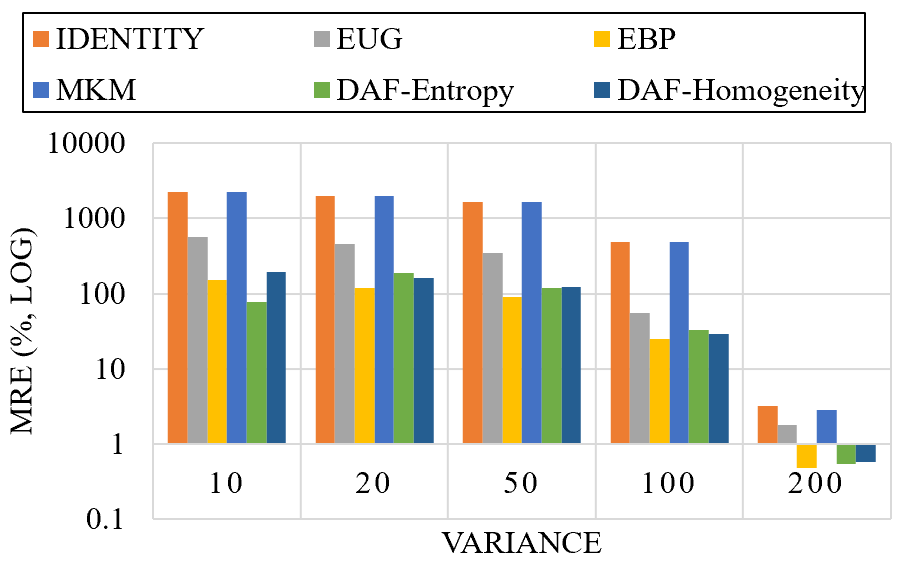}
	}
	%\vspace{-10pt}
	\hfill
	\subfloat[4D, $\epsilon_{\text{tot}} = 0.1$.\label{s4}]{%
	\includegraphics[scale=.43]{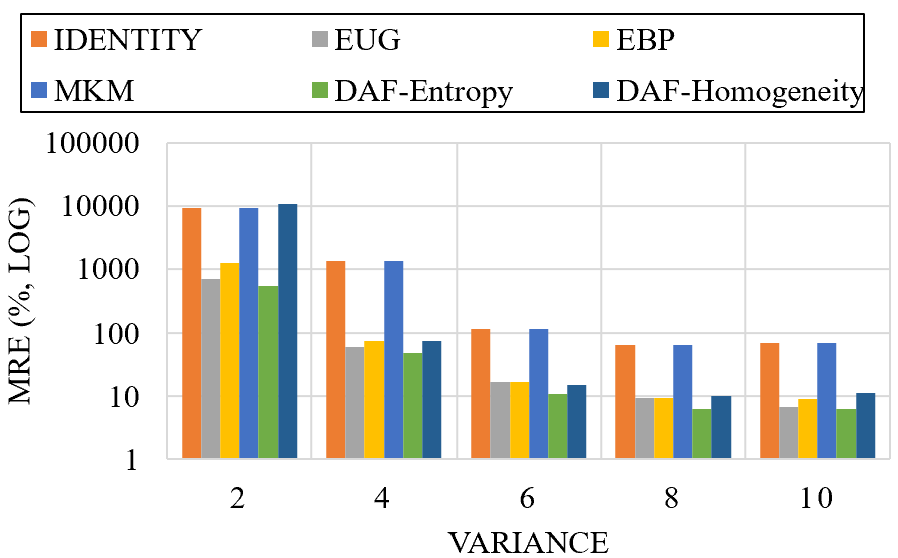}
	}
	\hfill
	\subfloat[4D, $\epsilon_{\text{tot}} = 0.3$.\label{s5}]{%
	\includegraphics[scale=.43]{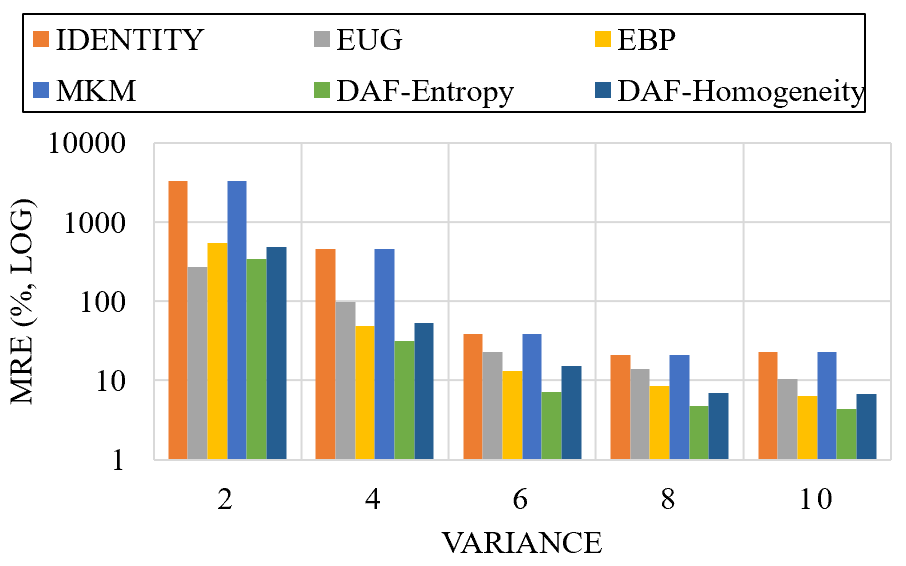}
	}
	\hfill
	\subfloat[4D, $\epsilon_{\text{tot}} = 0.5$.\label{s6}]{%
	\includegraphics[scale=.43]{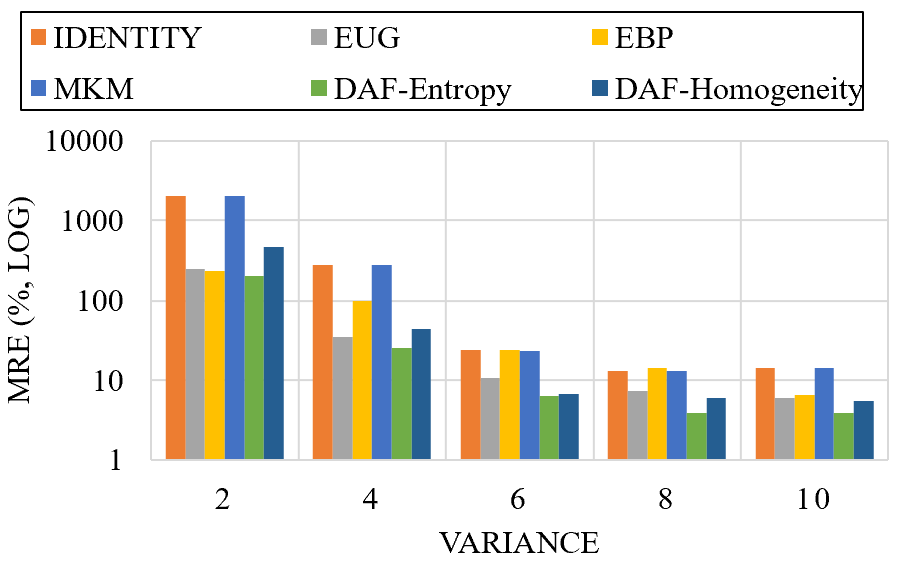}
	}
	%\vspace{-10pt}
	\hfill
	\subfloat[6D, $\epsilon_{\text{tot}} = 0.1$.\label{s7}]{%
	\includegraphics[scale=.43]{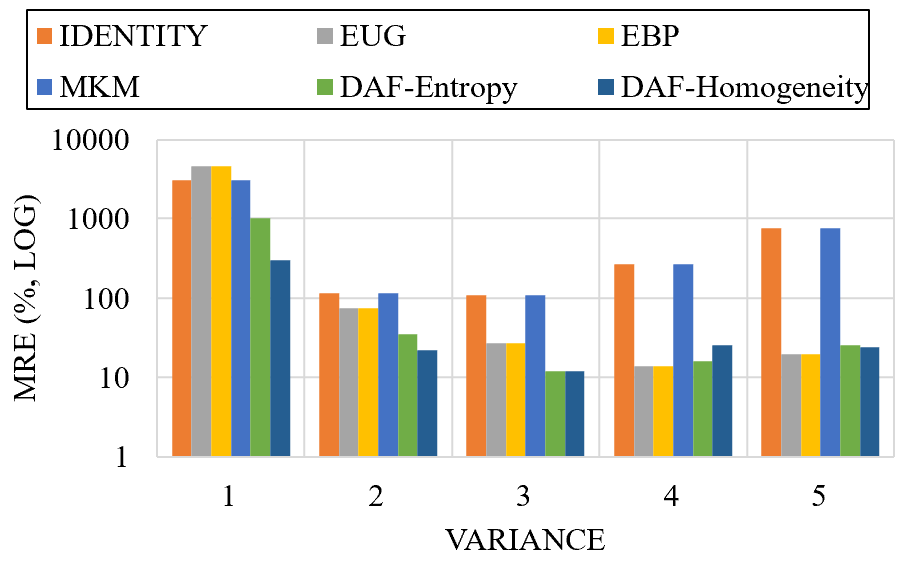}
	}
	\hfill
	\subfloat[6D, $\epsilon_{\text{tot}} = 0.3$.\label{s8}]{%
	\includegraphics[scale=.43]{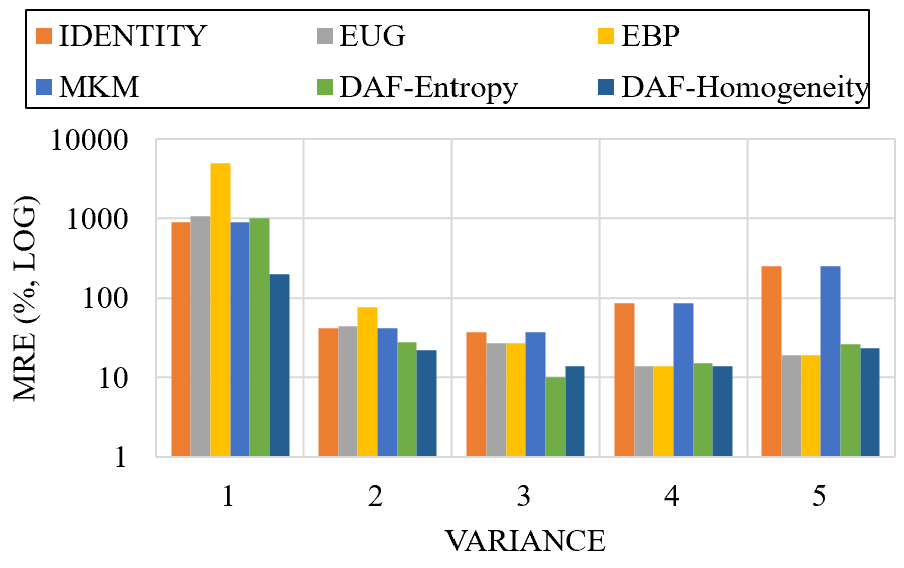}
	}
	\hfill
	\subfloat[6D, $\epsilon_{\text{tot}} = 0.5$.\label{s9}]{%
	\includegraphics[scale=.43]{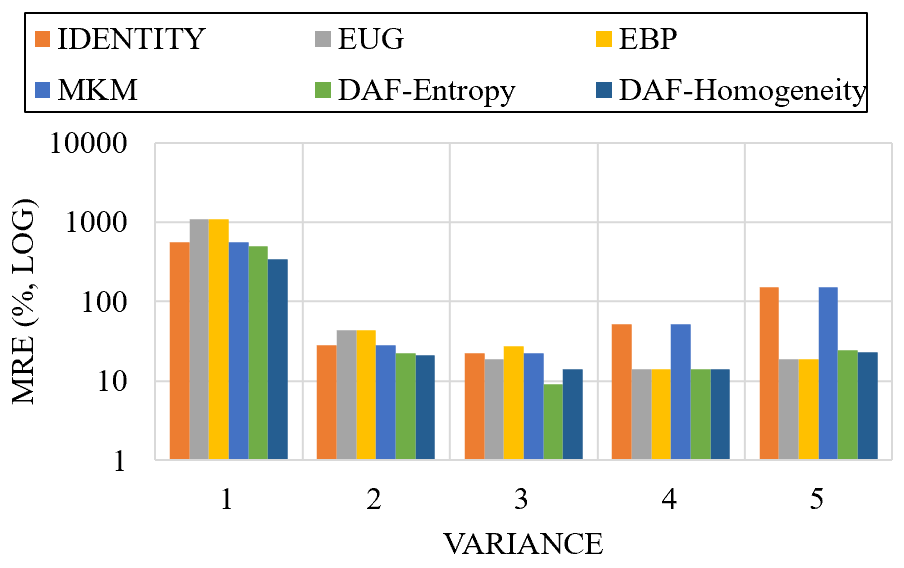}
	}
	%\vspace{-10pt}
	\caption{Synthetic dataset results, Gaussian distribution, random shape and size queries.}
	\label{Fig: synthetic datasets}
	%\vspace{-20pt}
\end{figure*}

%{\bf Real-world origin-destination matrices in 4D.} 
Based on the real location data, we construct origin-destination matrices: in each city, $300,000$ trajectories are sampled, and their origin, destination and intermediate points are included in the OD matrix. The data are stored as a multi-dimensional frequency matrix generated as follows: the map of each city is discretized to a $1000\times 1000$ grid, and for every trajectory with the origin coordinates of $(x_o,y_o)$ and destination coordinates of $(x_d,y_d)$, the element $F[x_o,y_o,x_d,y_d]$ in the frequency matrix is incremented by one. A similar process is conducted for intermediate points, with the distinction that the matrix dimension count increases. 
%To the best of our knowledge, it is the first that, armed with efficient methods for sanitization in higher dimensions, range queries can be applied on OD-matrices. Such enhancement happens because no dimensional reduction is required to 2D. 
%The statistics of the OD-matrices used in the experiments are provided in Table~\ref{Table: OD-matrices}. 

The evaluation metric used to compare the results is Mean Relative Error (MRE), formally defined in Section~\ref{Sec: system model}, Eq.~\eqref{Equation: MRE}. We evaluate the accuracy of considered approaches on the basis of:

\begin{itemize}
    \item {\em Varying Data Skewness/Distribution.} The generation of synthetic datasets is conducted for Gaussian and Zipfian random variables with distinct variances;  for real-world datasets we select cities with a wide range of skewness properties. 
    \item {\em Varying Query Shape/Size.} Each data point in our experiments is the average MRE result of $1000$ queries generated based on random shapes and sizes. Additionally, the impact of small, medium and large queries is evaluated.
    \item {\em Varying Privacy Budget.} The experiments consider three privacy budget values of $0.1,\, 0.3,\, 0.5$ modeling high, moderate, and low privacy  constraints.
    \item {\em Varying  dimensionality.} We run experiments on frequency matrices with dimensionality from two to six. 
\end{itemize}

{\bf Compared Approaches.}
Table~\ref{Table: contribution} provides a summary and corresponding references for each of the algorithms used in our evaluation. More details about each of the baselines are provided in Section~\ref{Sec: literature reivew}. In total, we consider six techniques: IDENTITY, EUG, EBP, MKM, DAF-Entropy, and DAF-Homogeneity. 
%Additionally, we have performed experiments on the {\em Singular} algorithm; however, due to its poor performance, its results are not shown, in order to keep the scale at a level that permits visual comparison across approaches. 

\begin{figure*}[t]
	\subfloat[2D.\label{z1}]{%
	\includegraphics[scale=.23]{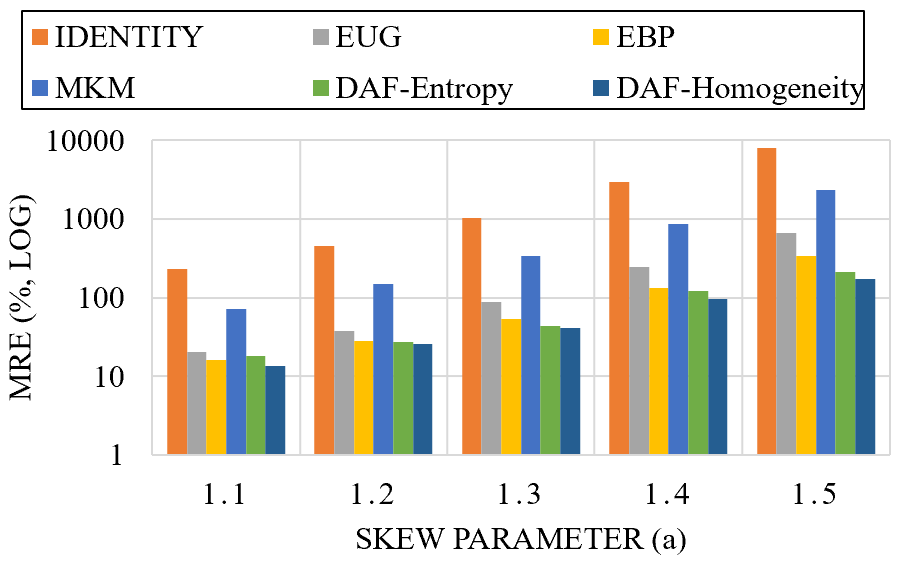}
	}
	\hfill
	\subfloat[4D. \label{z2}]{%
	\includegraphics[scale=.23]{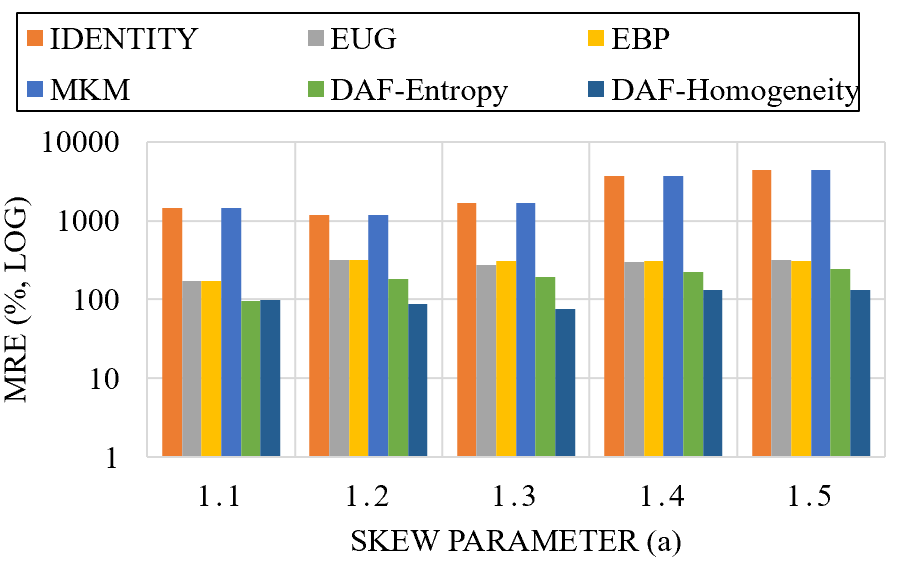}
	}
	\hfill
	\subfloat[6D. \label{z3}]{%
	\includegraphics[scale=.23]{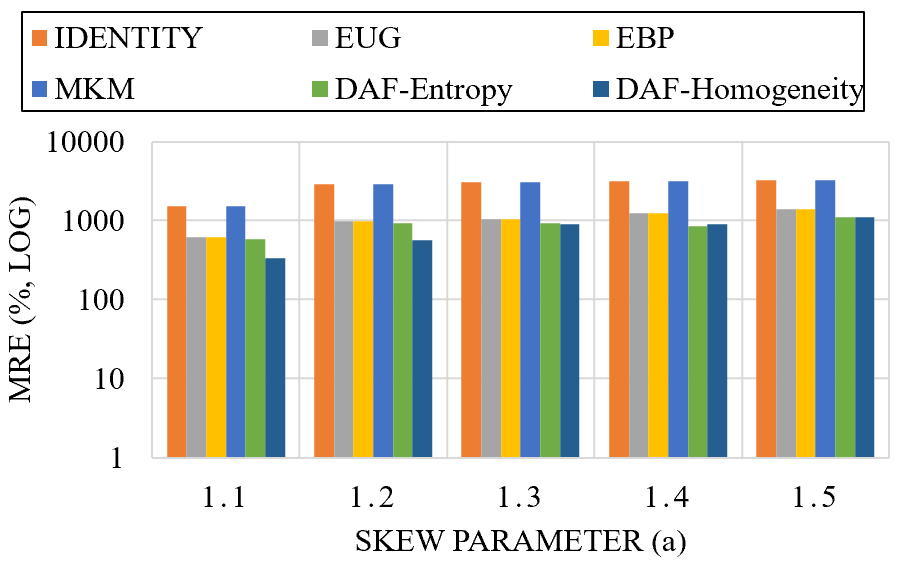}
	}
	%%\vspace{-10pt}
	\caption{Synthetic dataset results, Zipf distribution, random shape and size queries, $\epsilon_{\text{tot}} = 0.1$.}
	\label{Fig: synthetic datasets zipf distribution}
	%\vspace{-15pt}
\end{figure*}	

\begin{figure*}[t]
%	\subfloat[Phoenix, random queries \label{PD1}]{%
%	\includegraphics[scale=.173]{Figures/Phoenix_FM1.png}
%	}
%	\hfill
%	\subfloat[Phoenix, 1\% query coverage  \label{PD1}]{%
%	\includegraphics[scale=.173]{Figures/Phoenix_FM2.png}
%	}
%	\hfill
%	\subfloat[Phoenix, 5\% query coverage \label{PD1}]{%
%	\includegraphics[scale=.173]{Figures/Phoenix_FM3.png}
%	}
%	\hfill
%	\subfloat[Phoenix, 10\% query coverage \label{PD1}]{%
%	\includegraphics[scale=.173]{Figures/Phoenix_FM4.png}
%	}
	\hfill
	\subfloat[New York, random queries]{%
	\includegraphics[scale=.173]{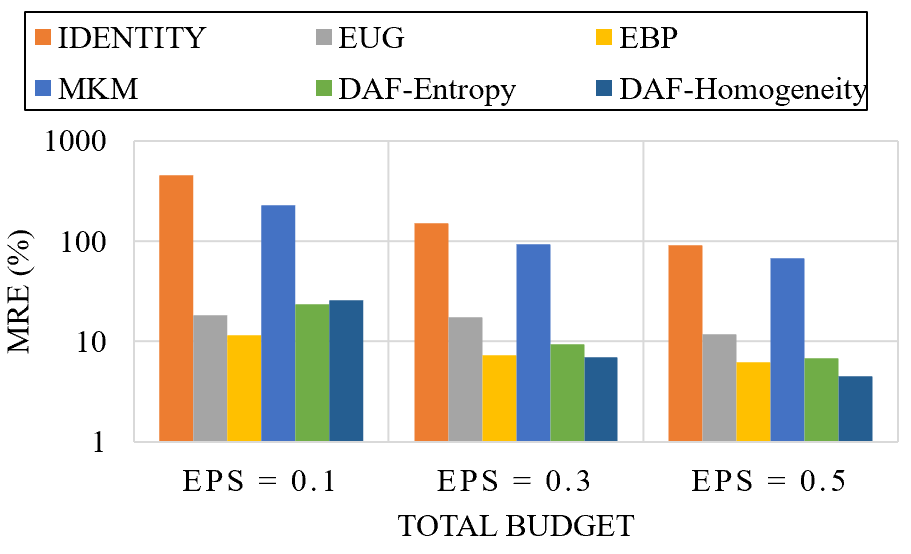}
	}
	\hfill
	\subfloat[New York, 1\% query coverage]{%
	\includegraphics[scale=.173]{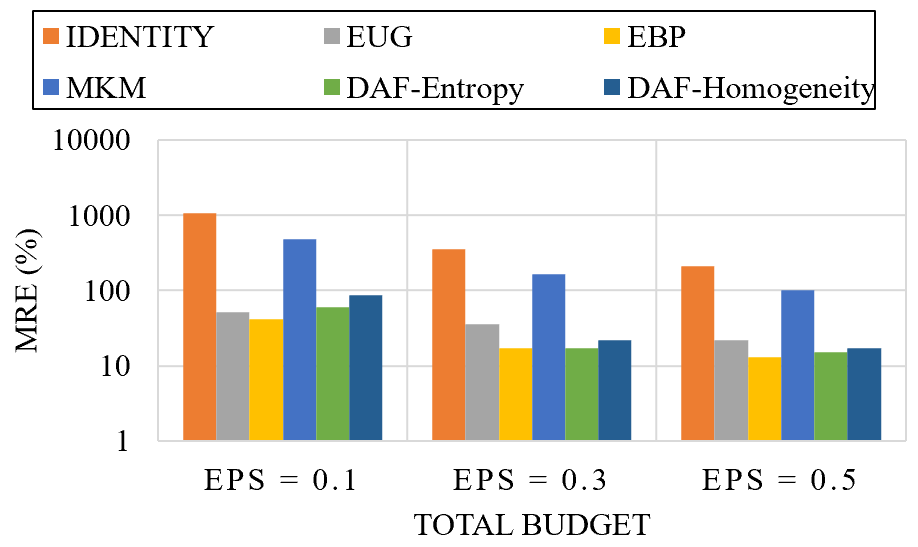}
	}
	\hfill
	\subfloat[New York, 5\% query coverage]{%
	\includegraphics[scale=.173]{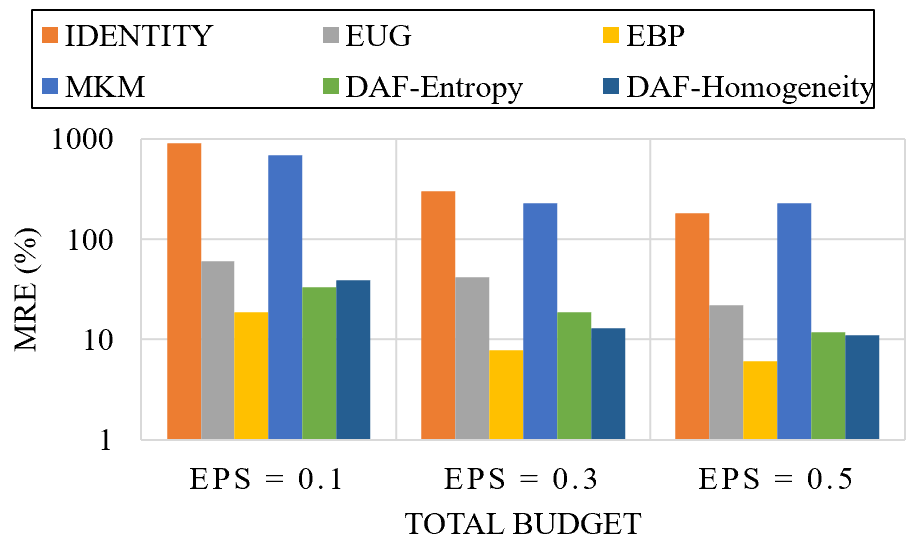}
	}
	\hfill
	\subfloat[New York, 10\% query coverage]{%
	\includegraphics[scale=.173]{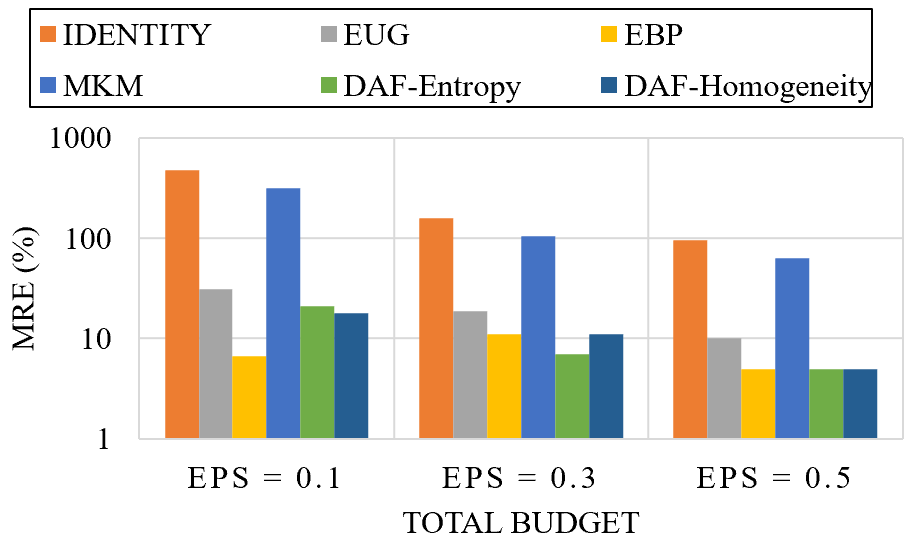}
	}
	%\vspace{-10pt}
	\hfill
	\subfloat[Denver, random queries]{%
	\includegraphics[scale=.173]{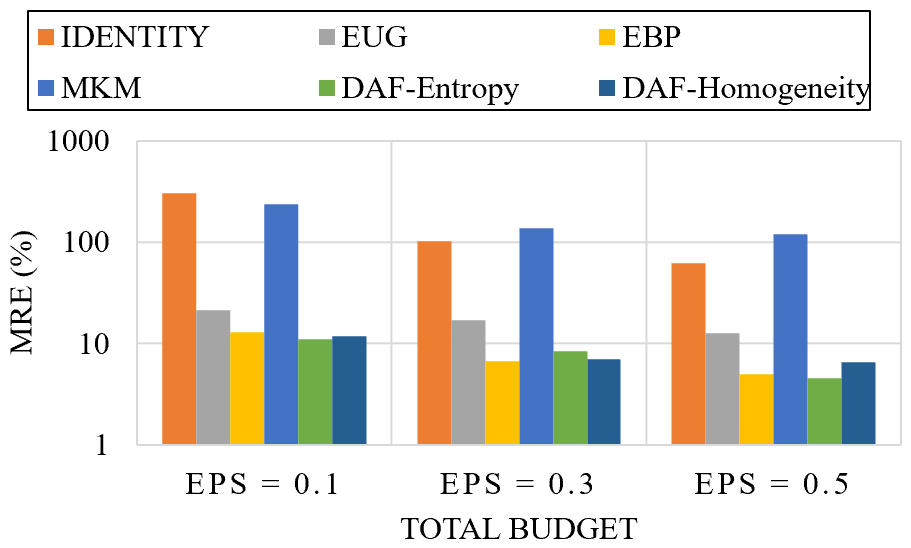}
	}
	\hfill
	\subfloat[Denver, 1\% query coverage]{%
	\includegraphics[scale=.173]{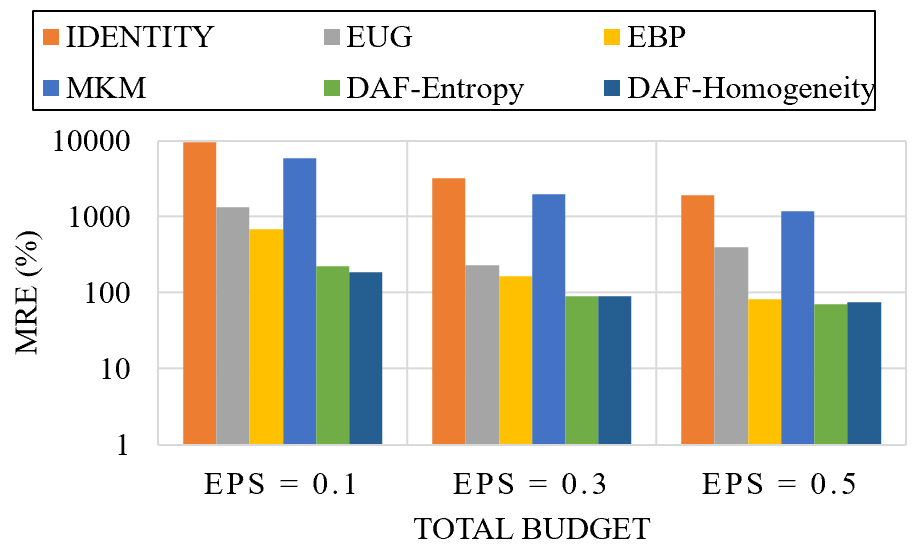}
	}
	\hfill
	\subfloat[Denver, 5\% query coverage]{%
	\includegraphics[scale=.173]{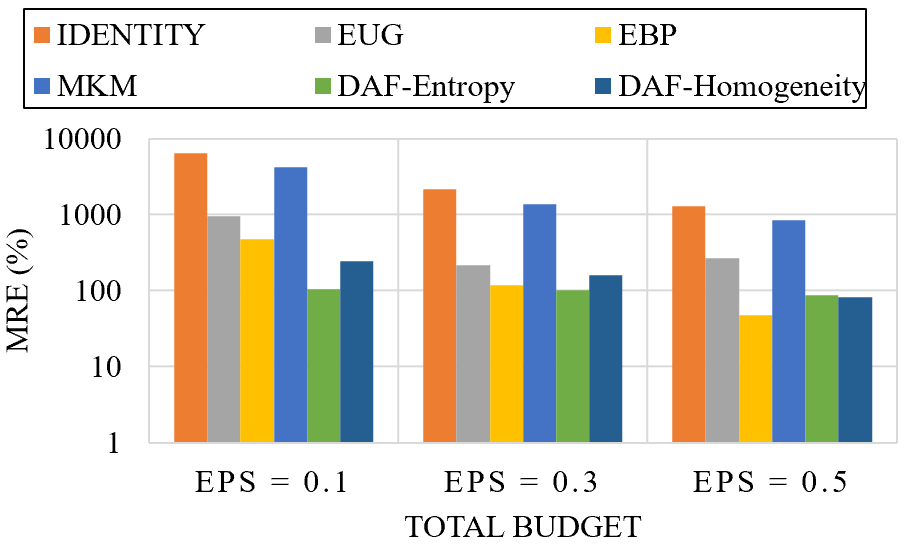}
	}
	%\vspace{-10pt}
	\hfill
	\subfloat[Denver, 10\% query coverage]{%
	\includegraphics[scale=.173]{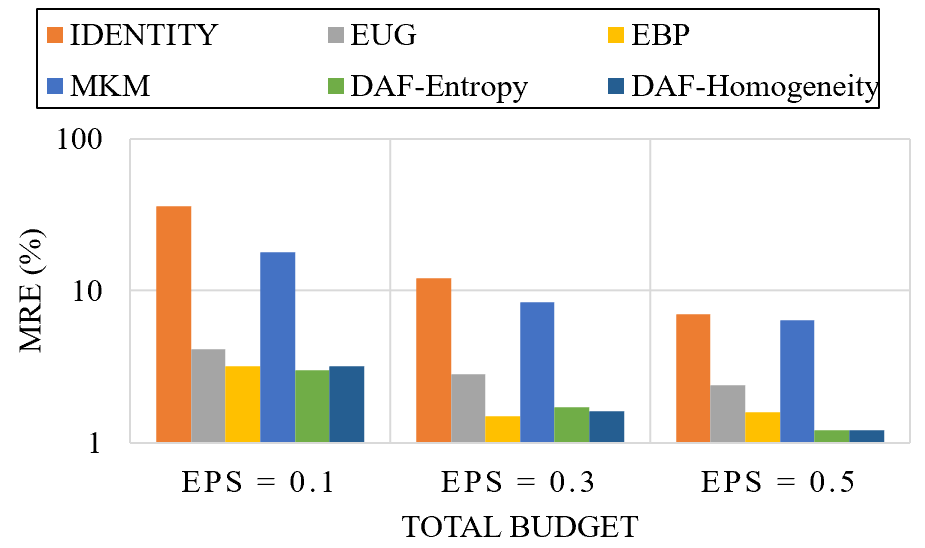}
	}
	%\vspace{-10pt}
	\hfill
%	\subfloat[San Francisco, random queries\label{PD4}]{%
%	\includegraphics[scale=.173]{Figures/SanFrancisco_FM1.png}
%	}
%	\hfill
%	\subfloat[San Francisco, 1\% query coverage\label{PD4}]{%
%	\includegraphics[scale=.173]{Figures/SanFrancisco_FM2.png}
%	}
%	\hfill
%	\subfloat[San Francisco, 5\% query coverage\label{PD4}]{%
%	\includegraphics[scale=.173]{Figures/SanFrancisco_FM3.png}
%	}
%	\hfill
%	\subfloat[San Francisco, 10\% query coverage\label{PD4}]{%
%	\includegraphics[scale=.173]{Figures/SanFrancisco_FM4.png}
%	}
%	\hfill
%	\subfloat[Seattle, random queries\label{PD5}]{%
%	\includegraphics[scale=.173]{Figures/Seatle_FM1.png}
%	}
%	\hfill
%	\subfloat[Seattle, 1\% query coverage\label{PD5}]{%
%	\includegraphics[scale=.173]{Figures/Seatle_FM2.png}
%	}
%	\hfill
%	\subfloat[Seattle, 5\% query coverage\label{PD5}]{%
%	\includegraphics[scale=.173]{Figures/Seatle_FM3.png}
%	}
%	\hfill
%	\subfloat[Seattle, 10\% query coverage\label{PD5}]{%
%	\includegraphics[scale=.173]{Figures/Seatle_FM4.png}
%	}
%	\hfill
	\subfloat[Detroit, random queries]{%
	\includegraphics[scale=.173]{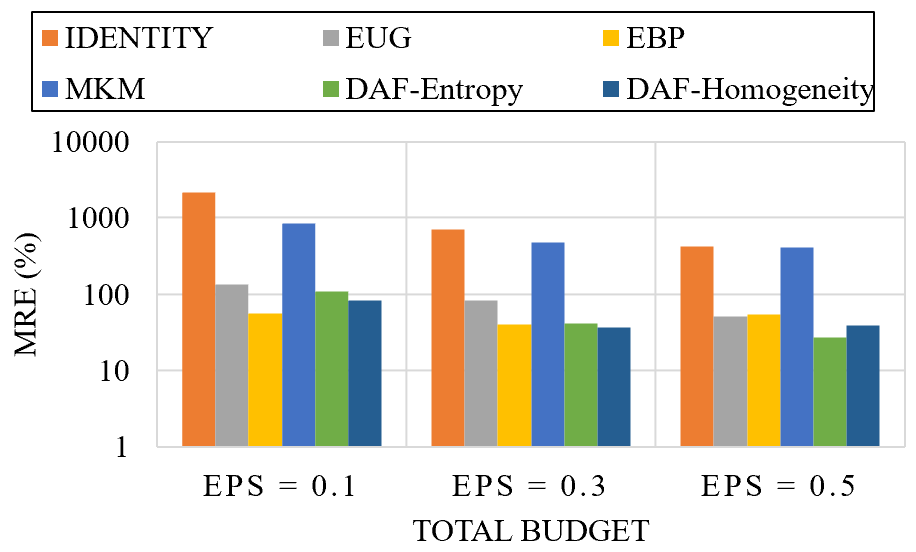}
	}
	\hfill
	\subfloat[Detroit, 1\% query coverage]{%
	\includegraphics[scale=.173]{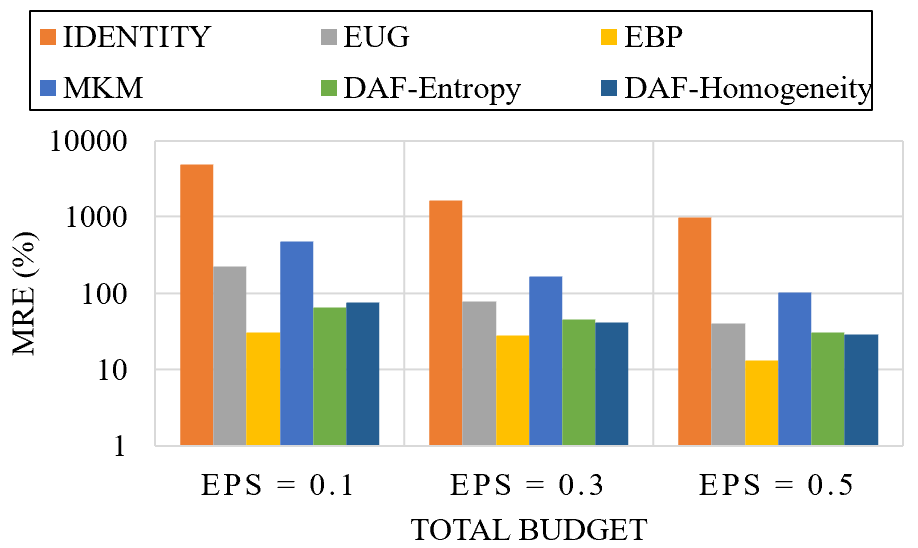}
	}
	\hfill
	\subfloat[Detroit, 5\% query coverage]{%
	\includegraphics[scale=.173]{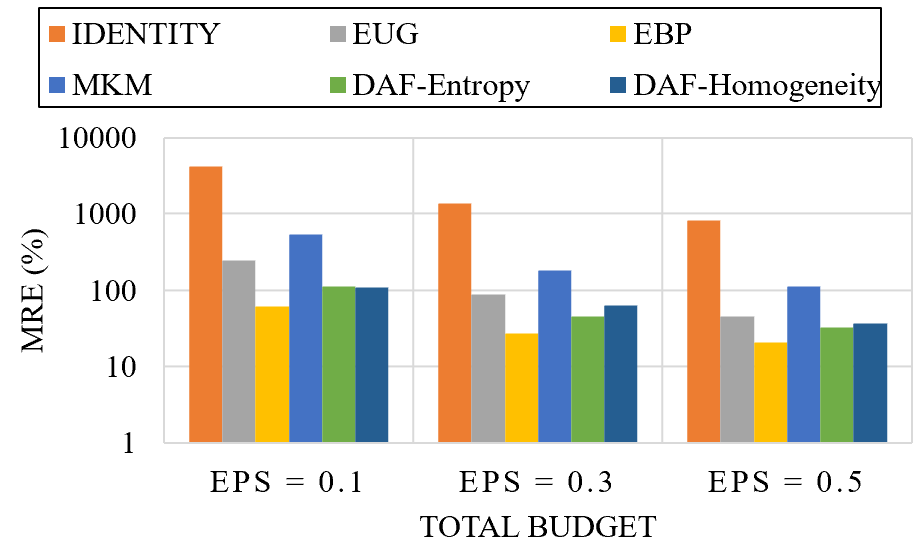}
	}
	\hfill
	\subfloat[Detroit, 10\% query coverage]{%
	\includegraphics[scale=.173]{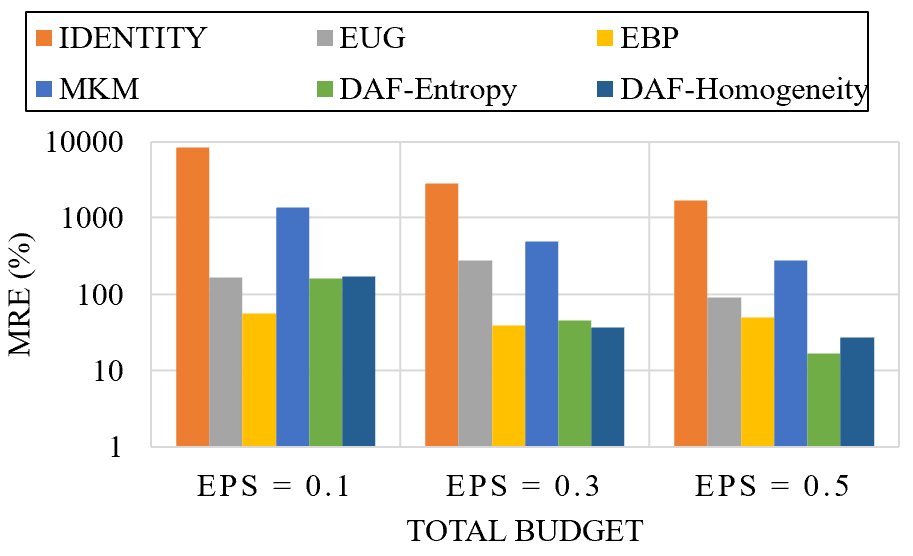}
	}
	%\vspace{-10pt}
	\caption{Population histograms in 2D for real datasets.}
	\label{Fig: population histograms 2D all}
	%\vspace{-10pt}
\end{figure*}

%\vspace{-8pt}
\subsection{Results on Synthetic Datasets}\label{Subsec: Results on Synthetic Datasets}

Figure~\ref{Fig: synthetic datasets} presents evaluation results on synthetic datasets. For each distinct dimensionality (i.e., row), we consider low, medium and high privacy budget settings. The width of frequency matrices in each dimension is set to $\sqrt[d]{N}$.%, the size of the frequency matrix is $1000\times 1000$.

For the 2D case, results are shown in Figures~\ref{s1}-\ref{s3}. EBP and DAF-Entropy provide superior accuracy compared to other techniques, followed by DAF-Homogeneity and EUG. The MKM and IDENTITY algorithms exhibit similar performance, and we observed that MKM is reaching the maximum granularity for the frequency matrix. This is justified by the fact that the MKM approach does not follow the {\em epsilon-scale} exchangeability principle identified in~\cite{ashwin}. In general, there exist two scenarios in which data-independent algorithms perform better: (i) the data points are distributed almost uniformly, (i.e., high variance) and (ii) the data points are densely populated in the cluster center in a handful of matrix entries (i.e., low variance). 
%Both scenarios resemble the case where data points are uniformly distributed in the frequency matrix. This is true because when the matrix is highly dense, most entries of the matrix are zero, simulating uniformity. 
The superior performance of the DAF framework becomes more evident in higher dimensions. In almost all experiments conducted, the DAF framework outperformed the data-independent sanitization approaches. Among the two objective functions that we developed for DAF, DAF-Entropy generally outperforms  DAF-Homogeneity. 
%It is worth noting that the DAF-Homogeneity has two tuning parameters, i.e. the ratio of privacy budget allocated for partitioning and the number of candidate sets. To be consistent, we have experimentally set these values to $0.3$ and $3$, respectively, due to the better observed performance. 

We also evaluate the studied approaches for Zipf synthetic distribution of data. Figure~\ref{Fig: synthetic datasets zipf distribution} shows similar relative trends, with the proposed approaches outperforming existing work by an order of magnitude. The error increases as the skew parameter $a$ increases.

\begin{figure*}[t]
%	\subfloat[Phoenix, random queries \label{PD1}]{%
%	\includegraphics[scale=.173]{}
%	}
%	\hfill
%	\subfloat[Phoenix, 1\% query coverage  \label{PD1}]{%
%	\includegraphics[scale=.173]{}
%	}
%	\hfill
%	\subfloat[Phoenix, 5\% query coverage \label{PD1}]{%
%	\includegraphics[scale=.173]{}
%	}
%	\hfill
%	\subfloat[Phoenix, 10\% query coverage \label{PD1}]{%
%	\includegraphics[scale=.173]{}
%	}
%	\hfill
	\subfloat[New York, random queries]{%
	\includegraphics[scale=.173]{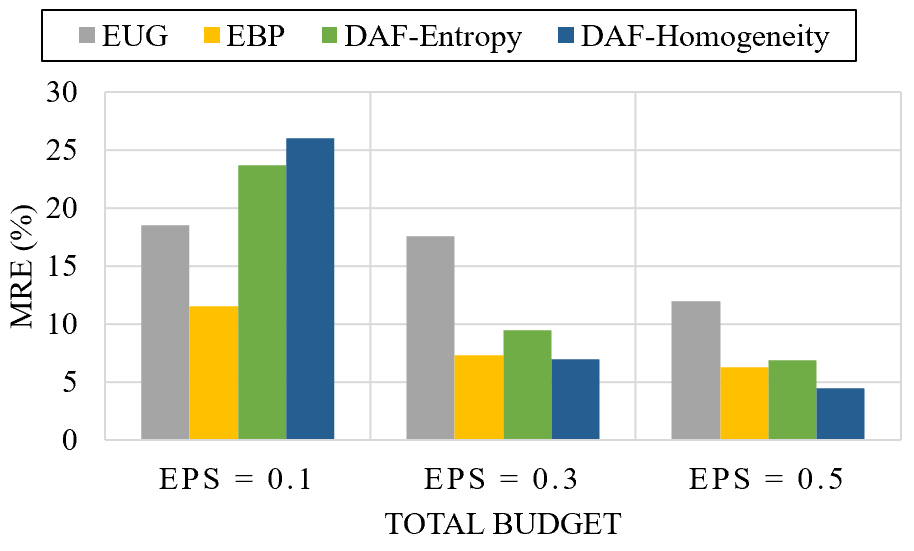}
	}
	\hfill
	\subfloat[New York, 1\% query coverage]{%
	\includegraphics[scale=.173]{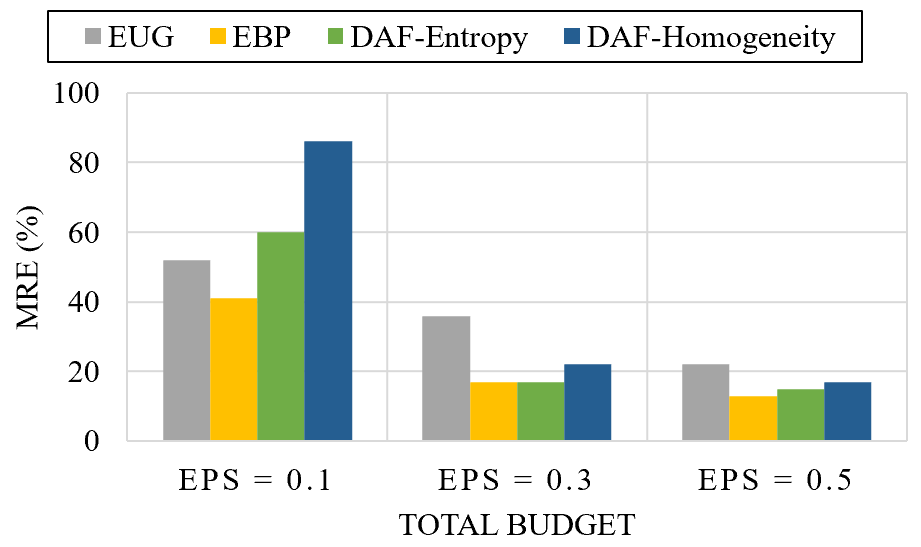}
	}
	\hfill
	\subfloat[New York, 5\% query coverage]{%
	\includegraphics[scale=.173]{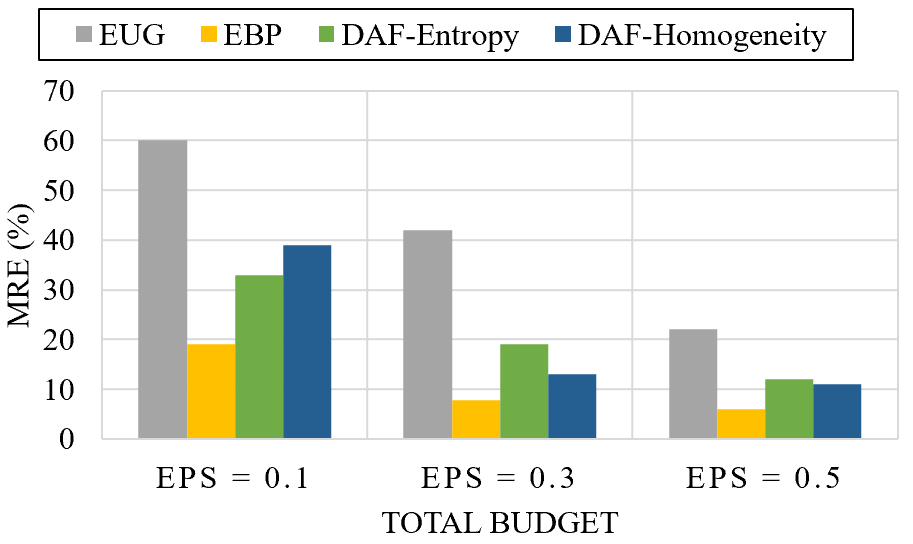}
	}
	\hfill
	\subfloat[New York, 10\% query coverage]{%
	\includegraphics[scale=.215]{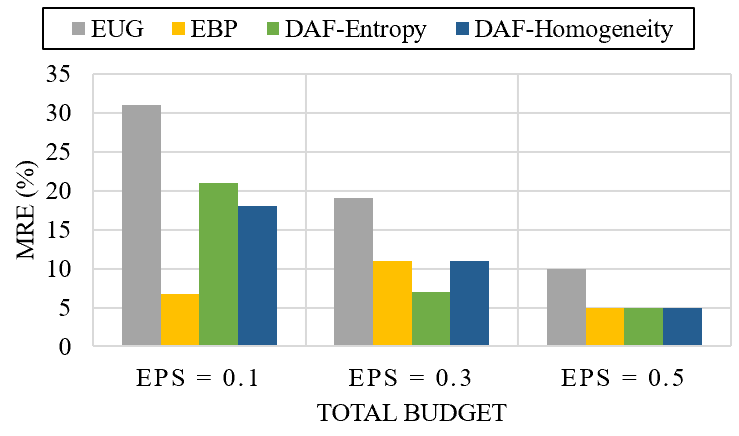}
	}
	%\vspace{-10pt}
	\hfill
	\subfloat[Denver, random queries]{%
	\includegraphics[scale=.173]{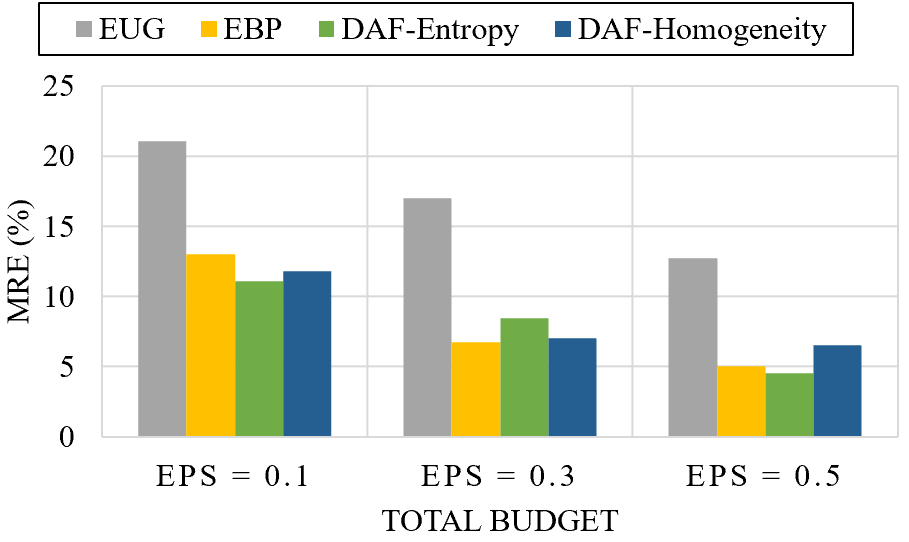}
	}
	%\vspace{-10pt}
	\hfill
	\subfloat[Denver, 1\% query coverage]{%
	\includegraphics[scale=.173]{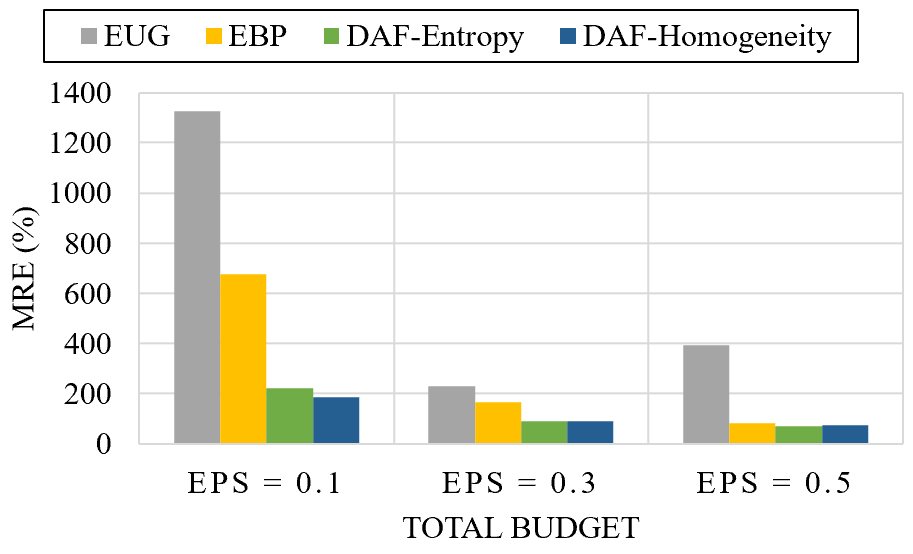}
	}
	%\vspace{-10pt}
	\hfill
	\subfloat[Denver, 5\% query coverage]{%
	\includegraphics[scale=.173]{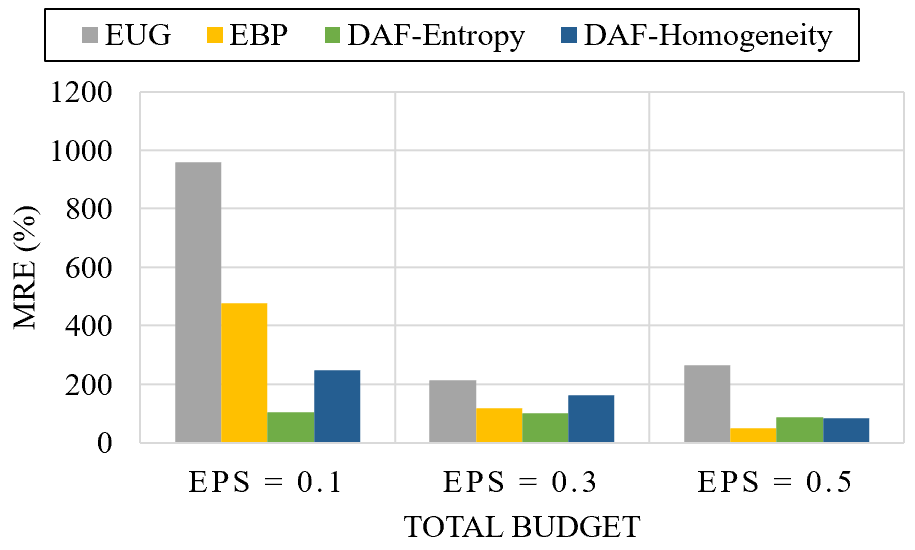}
	}
	%\vspace{-10pt}
	\hfill
	\subfloat[Denver, 10\% query coverage]{%
	\includegraphics[scale=.173]{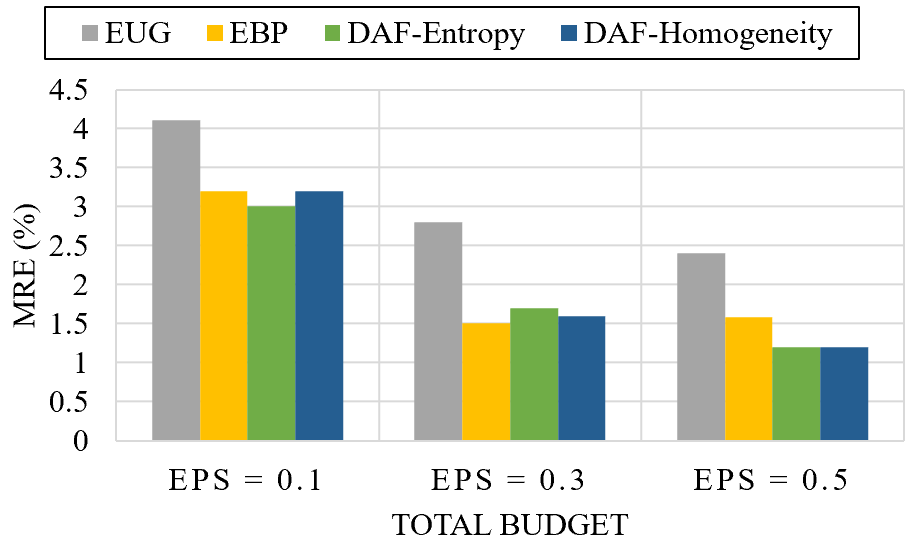}
	}
	%\vspace{-10pt}
	\hfill
%	\subfloat[San Francisco, random queries\label{PD4}]{%
%	\includegraphics[scale=.173]{}
%	}
%	\hfill
%	\subfloat[San Francisco, 1\% query coverage\label{PD4}]{%
%	\includegraphics[scale=.173]{}
%	}
%	\hfill
%	\subfloat[San Francisco, 5\% query coverage\label{PD4}]{%
%	\includegraphics[scale=.173]{}
%	}
%	\hfill
%	\subfloat[San Francisco, 10\% query coverage\label{PD4}]{%
%	\includegraphics[scale=.173]{}
%	}
%	\hfill
%	\subfloat[Seattle, random queries\label{PD5}]{%
%	\includegraphics[scale=.173]{}
%	}
%	\hfill
%	\subfloat[Seattle, 1\% query coverage\label{PD5}]{%
%	\includegraphics[scale=.173]{}
%	}
%	\hfill
%	\subfloat[Seattle, 5\% query coverage\label{PD5}]{%
%	\includegraphics[scale=.173]{}
%	}
%	\hfill
%	\subfloat[Seattle, 10\% query coverage\label{PD5}]{%
%	\includegraphics[scale=.173]{}
%	}
%	\hfill
	\subfloat[Detroit, random queries]{%
	\includegraphics[scale=.173]{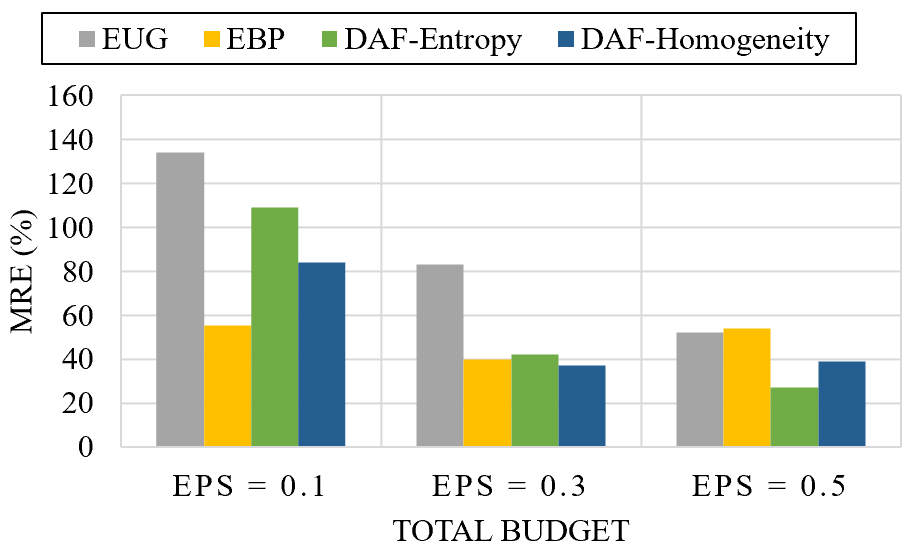}
	}
	\hfill
	\subfloat[Detroit, 1\% query coverage]{%
	\includegraphics[scale=.173]{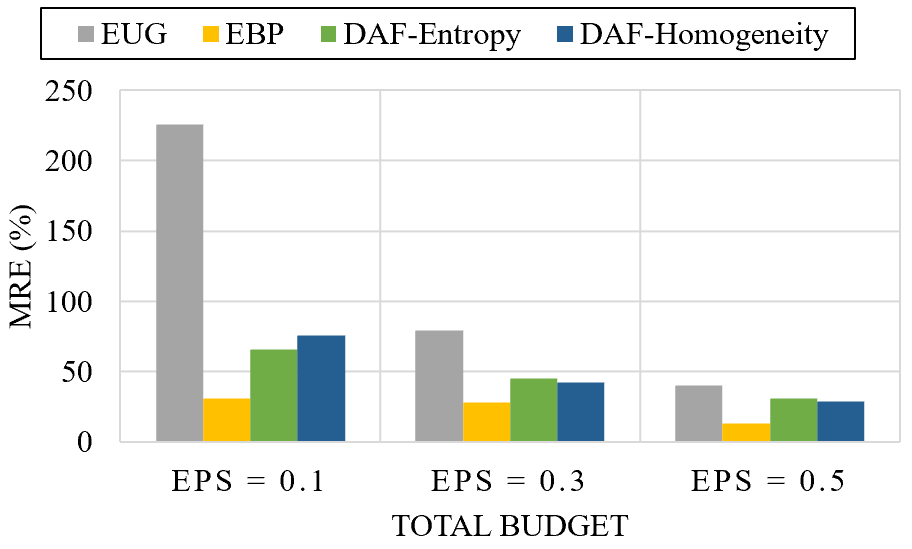}
	}
	\hfill
	\subfloat[Detroit, 5\% query coverage]{%
	\includegraphics[scale=.173]{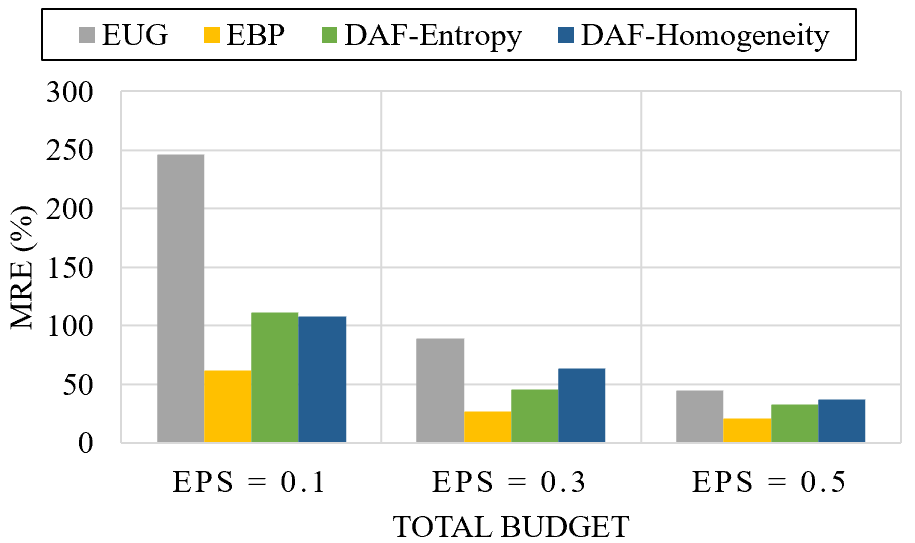}
	}
	\hfill
	\subfloat[Detroit, 10\% query coverage]{%
	\includegraphics[scale=.173]{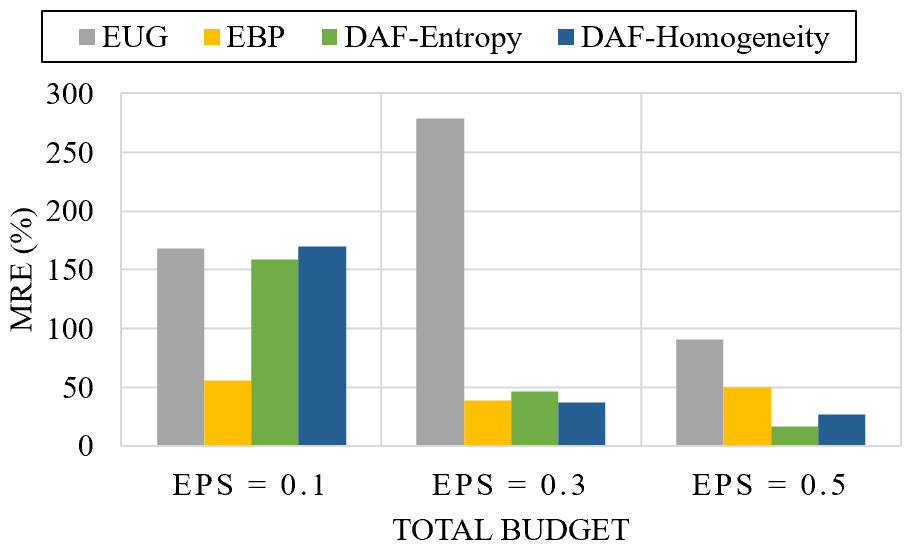}
	}
	%\vspace{-10pt}
	\caption{Population histograms in 2D on real datasets, no baselines.}
	\label{Fig: population histograms 2D nobase}
	%\vspace{-10pt}
\end{figure*}

%%\vspace{-8pt}
\subsection{Results on Real-World Datasets}\label{Subsec: Results on Real-world Datasets}

Figure~\ref{Fig: population histograms 2D all} shows the accuracy of all studied methods on 2D data, for various query workloads: a mix of random queries, as well as fixed coverage queries with range from $1\%$ to $10\%$ of dataspace side. As in the case of synthetic data, the IDENTITY and MKM benchmarks underperform by an order of magnitude. For all methods, the error decreases when the query range increases, which is expected, since coarser queries can be accurately answered using most methods. However, the more challenging case is that of small query ranges, which provide more detailed information to the analyst. 

Due to their poor performance, we exclude IDENTITY and MKM from the rest of the results, and focus on studying the relative performance of the proposed approaches, illustrated in Figure~\ref{Fig: population histograms 2D nobase} on linear scale. The EUG algorithm results in poorer accuracy overall. For Detroit and New York EBP has performed better than competing techniques. The EBP and DAF results are comparable for the Denver datasets, with DAF-Homogeneity providing the highest accuracy. The EBP algorithm performs better in cities where the entropy of the population histogram is higher. This aligns with our expectations, as greater entropy can be an indicator of higher skewness, where EUG performs worse. When increasing the privacy budget, the error of all algorithms decreases consistently, since the noise required to satisfy the privacy bound becomes lower.  
Fig.~\ref{Fig: population histograms 4D} presents the results for higher-dimensionality matrices. Similar to the results observed for synthetic datasets, DAF-Entropy has superior accuracy on average compared to the other techniques. The relative accuracy gain achieved by DAF is observed to increase as the number of dimensions increases.

Table~\ref{tab:my-table} shows the execution time for all techniques. The DAF methods have faster execution time, because they adapt to data and do not perform unnecessary splits. In all cases, the proposed techniques complete execution in less than five minutes.

{\bf Discussion.}
Data-independent methods perform better when data are highly uniform or highly concentrated around the cluster center. However, most location datasets do not fall in either of these cases, hence there is need for carefully-designed density-aware approaches, like the ones we proposed.
In lower dimensions, the EBP algorithm outperforms competitor approaches on both real-world and synthetic datasets. In higher dimensions, the density-aware algorithms outperform data-independent algorithms. The improvement margin increases as the number of dimensions grows.  
On average, DAF-Entropy outperforms its homogeneity-based counterpart due to the additional budget required for evaluating homogeneity metrics of candidate splits in the latter.

\begin{figure*}[t]
	\subfloat[New York, random queries]{%
	\includegraphics[scale=.173]{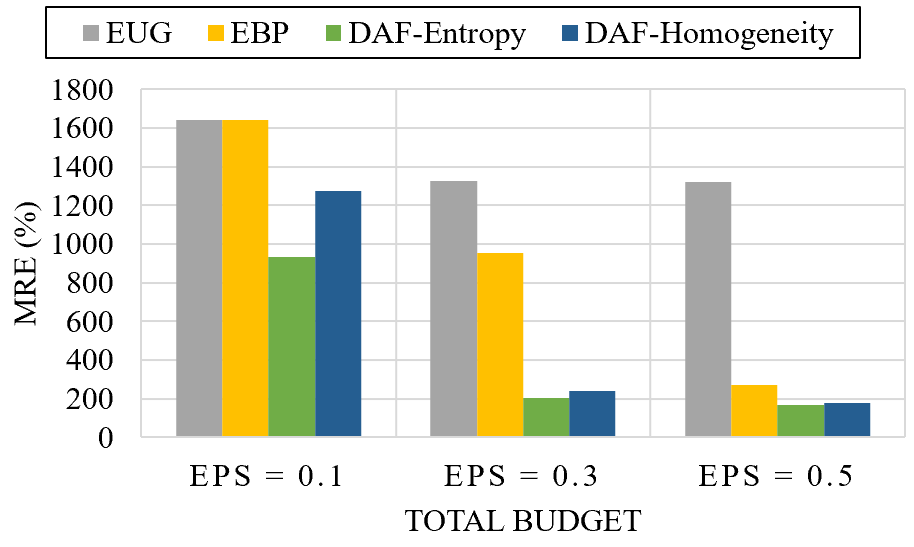}
	}
	\hfill
	\subfloat[New York, 1\% query coverage]{%
	\includegraphics[scale=.173]{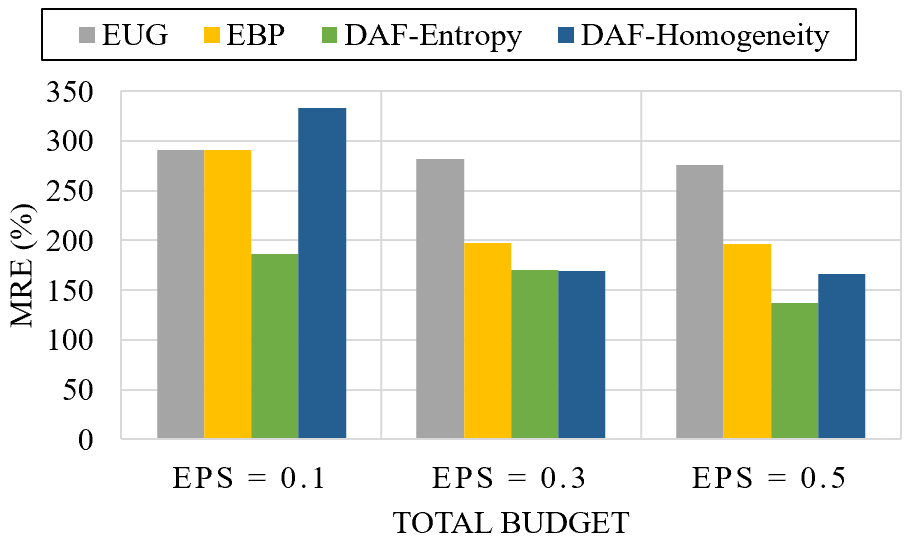}
	}
	\hfill
	\subfloat[New York, 5\% query coverage]{%
	\includegraphics[scale=.173]{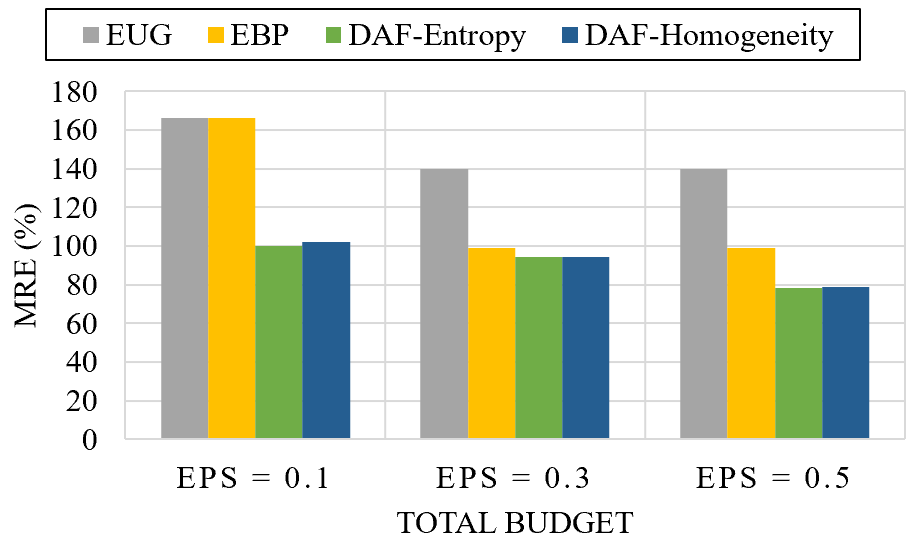}
	}
	\hfill
	\subfloat[New York, 10\% query coverage]{%
	\includegraphics[scale=.173]{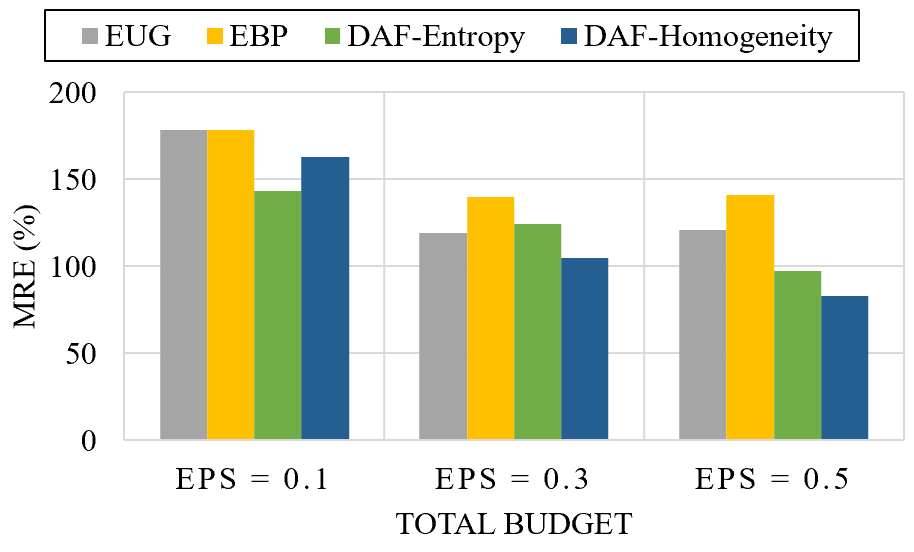}
	}	
	%\vspace{-10pt}
	\hfill
	\subfloat[Denver, random queries]{%
	\includegraphics[scale=.173]{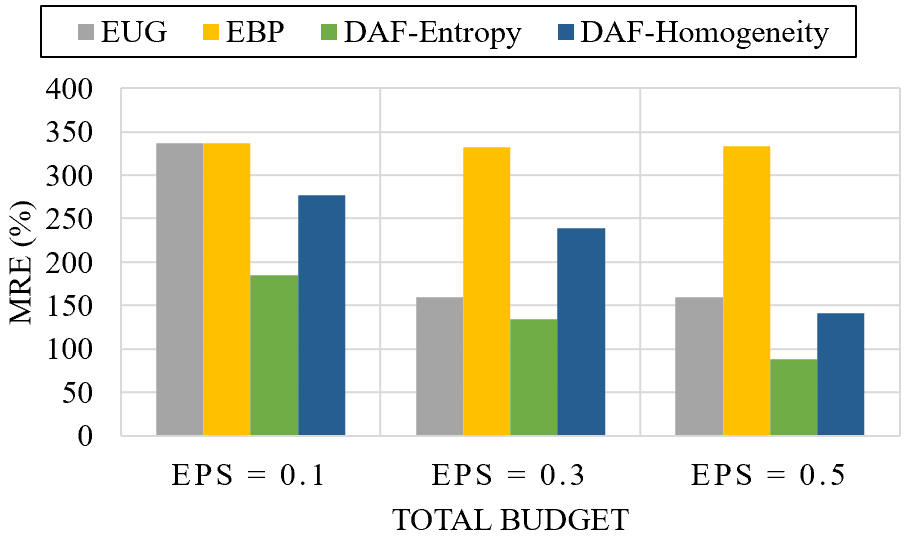}
	}
	\hfill
	\subfloat[Denver, 1\% query coverage]{%
	\includegraphics[scale=.173]{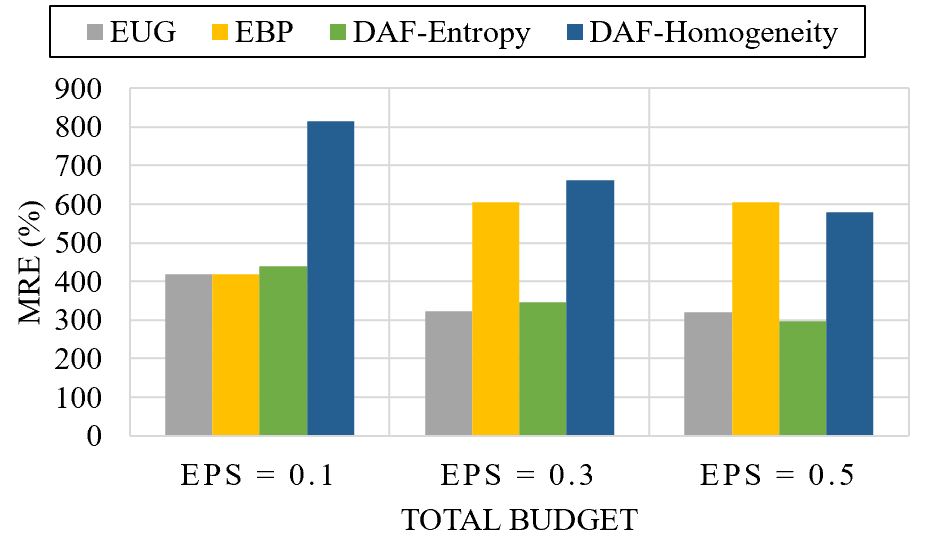}
	}
	\hfill
	\subfloat[Denver, 5\% query coverage]{%
	\includegraphics[scale=.173]{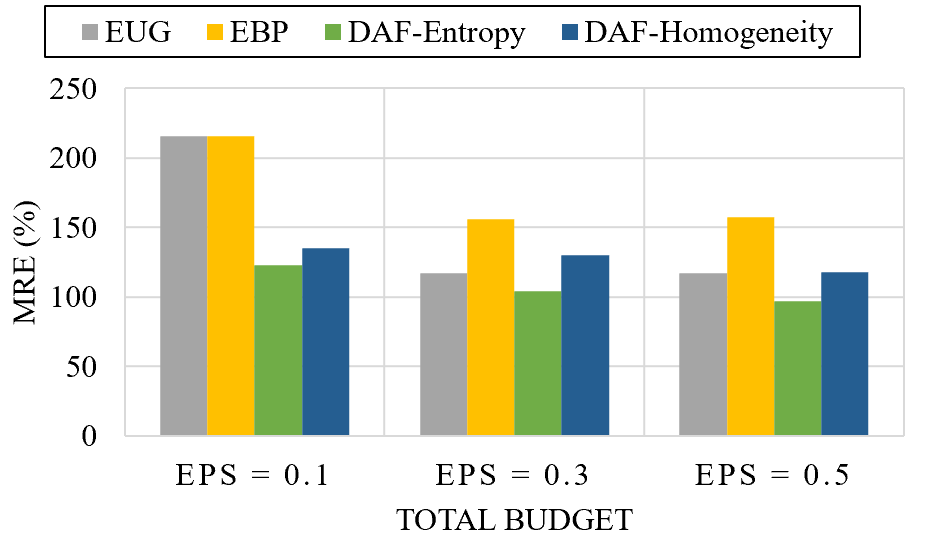}
	}
	%\vspace{-10pt}
	\hfill
	\subfloat[Denver, 10\% query coverage]{%
	\includegraphics[scale=.173]{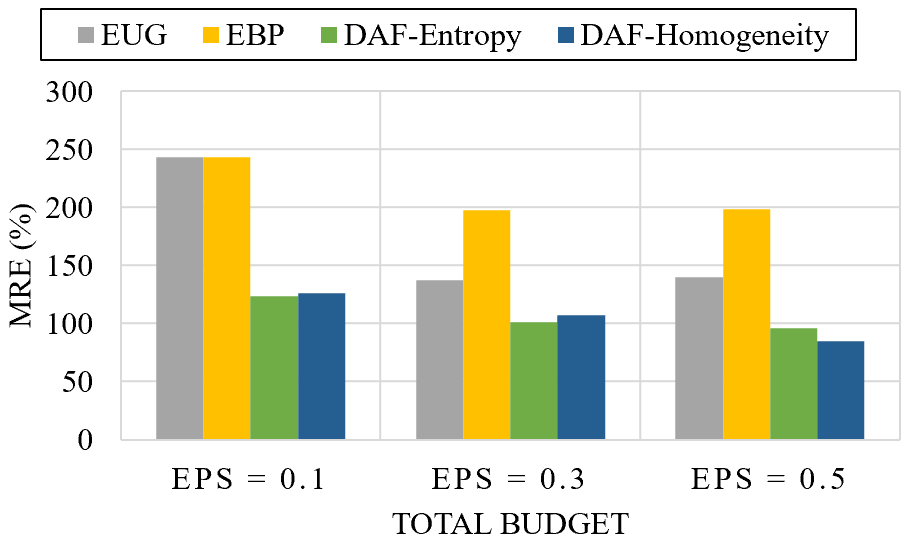}
	}
	%\vspace{-10pt}
	\hfill
	\subfloat[Detroit, random queries]{%
	\includegraphics[scale=.173]{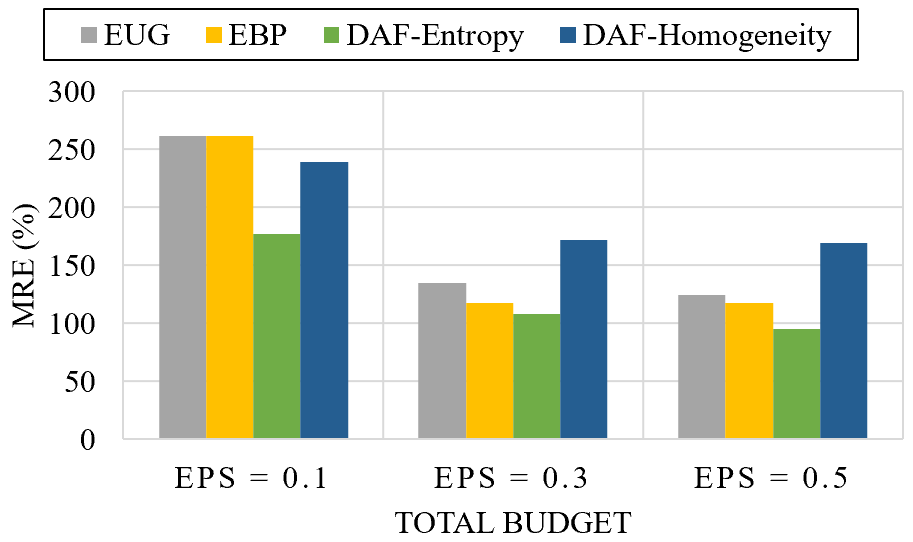}
	}
	\hfill
	\subfloat[Detroit, 1\% query coverage]{%
	\includegraphics[scale=.173]{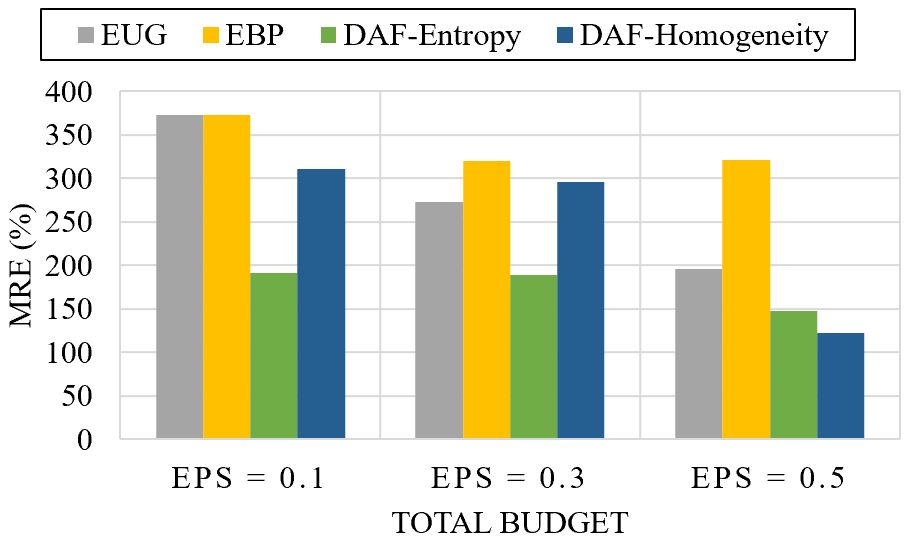}
	}
	\hfill
	\subfloat[Detroit, 5\% query coverage]{%
	\includegraphics[scale=.173]{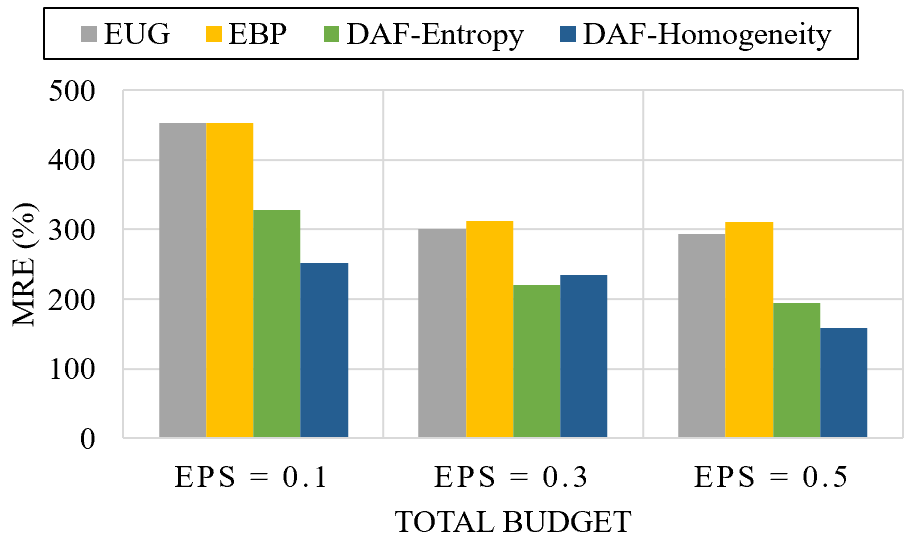}
	}
	\hfill
	\subfloat[Detroit, 10\% query coverage]{%
	\includegraphics[scale=.173]{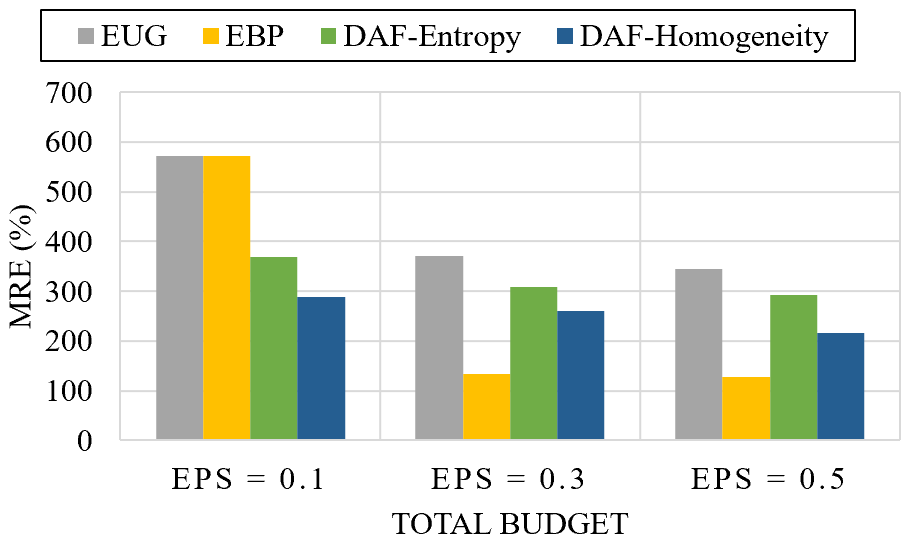}
	}
	%\vspace{-10pt}
	\caption{Origin-Destination matrices in 4D, real datasets.}
	\label{Fig: population histograms 4D}
	%\vspace{-15pt}
\end{figure*}

\begin{table}[]
\caption{Running time of algorithms (seconds), 2D, $\epsilon = 0.1$}
\label{tab:my-table}
\resizebox{.45\textwidth}{!}{%
\begin{tabular}{|l|l|l|l|l|l|l|}
\hline
         & IDENTITY & EUG   & EBP  & MKM  & DAF-Entropy & DAF-Homogeneity \\ \hline
New York & 89     & 87  & 87  & 177 & .47        & 0.5            \\ \hline
Denver   & 91     & 91 & 94 & 182   & 0.38        & 0.46           \\ \hline
Detroit  & 111     & 111  & 110 & 272   & 0.34        & 0.48            \\ \hline
\end{tabular}%
}
\end{table}

\section{Conclusion}\label{Sec: Conclusion}

We proposed a customized privacy-preserving approach for the publication of origin-destination matrices with intermediate stops in the context of differential privacy. Our proposed approaches provide the strong formal protection guarantees of differential privacy, while achieving superior accuracy to existing techniques that are designed for low-dimensionality location data and do not adapt well to data properties such as density variation. In future work, we plan to further improve accuracy by considering more sophisticated mechanisms in addition to Laplace noise addition. We will also investigate the correlation between location and semantic features of the geographical dataspace, which can provide additional accuracy in the case of semantic-centric queries (e.g., an analyst may be interested in trajectories that satisfy some semantic constraint, like {\em workplace-entertainment-sports sequences}, where the type of feature visited is more important than the actual geographical placement).

\newpage

\section*{Acknowledgement}
   This research has been funded in part by NSF grants 
   IIS-1910950, IIS-1909806, CNS-2027794,
   IIS-2128661 and CNS-2125530, 
   the USC Integrated Media Systems Center (IMSC), and an unrestricted cash gift from Microsoft Research. Any opinions, findings, and conclusions or recommendations expressed in this material are those of the author(s) and do not necessarily reflect the views of any of the sponsors such as the NSF.

%%
%% The next two lines define the bibliography style to be used, and
%% the bibliography file.
\bibliographystyle{ACM-Reference-Format}
\bibliography{sample-base}

%%
%% If your work has an appendix, this is the place to put it.
%% Please note that all the content must fit within the page limits, including any appendices.
%\appendix
%
%\section{Research Methods}
% ...

\end{document}